\begin{document}
\title{An algorithmic approach to the existence of ideal objects in commutative algebra
\thanks{The first, second and third author were supported by the German Science Foundation (DFG Project KO 1737/6-1); 
by the John Templeton Foundation (ID 60842) and 
by a Marie Sk\l odowska-Curie fellowship of the Istituto Nazionale di Alta Matematica, respectively. 
{The opinions
expressed in this paper are those of the authors and do not necessarily reflect the views of the John Templeton Foundation.}
}
}
\titlerunning{An algorithmic approach to the existence of ideal objects}
%
\author{Thomas Powell
\inst{1}
\and
Peter Schuster
\inst{2}
\and
Franziskus Wiesnet
\inst{3}
}
\authorrunning{Powell et al.}
%
\institute{Technische Universit\"{a}t Darmstadt \and
University of Verona \and University of Trento} 

%
\maketitle              
\begin{abstract}
The existence of ideal objects, such as maximal ideals in
nonzero rings, plays a crucial role in commutative algebra. These are
typically justified using Zorn's lemma, and thus pose a challenge from a
computational point of view. Giving a constructive meaning to ideal
objects is a problem which dates back to Hilbert's program, and today is still
a central theme in the area of dynamical algebra, which focuses on
the elimination of ideal objects via syntactic methods. In this paper, we
take an alternative approach based on Kreisel's no counterexample interpretation and
sequential algorithms. We first give a computational interpretation to an
abstract maximality principle in the countable setting via an intuitive,
state based algorithm. We then carry out a concrete case study, in which
we give an algorithmic account of the result that in any commutative ring,
the intersection of all prime ideals is contained in its nilradical.

\keywords{Proof theory \and Program extraction \and Commutative algebra \and No-counterexample interpretation.}
\end{abstract}

\section{Introduction}

This paper is an application of proof theory in commutative algebra. To be more precise, we use proof theoretic methods to give a computational interpretation 
to a general maximality principle (Theorem \ref{thm-zorn}), which in particular implies the existence of maximal ideals in commutative rings (Krull's lemma). 
In the context of second order arithmetic, the latter statement is equivalent to arithmetical comprehension \cite[Chapter III.5]{Simpson(1999.0)}, 
and thus Theorem \ref{thm-zorn} is a genuinely strong principle, and highly non-trivial from a computational perspective.

The extraction of programs from proofs has a long and rich history, dating back to Kreisel's pioneering work on the `unwinding' of proofs \cite{Kreisel(1951.0),Kreisel(1952.0)}. In the ensuing decades, the application of proof interpretations in particular has become a major topic in proof theory, and today encompasses both \emph{proof mining} \cite{Kohlenbach(2005.0),Kohlenbach(2008.0),Kohlenbach(2019.0)}, which focuses on obtaining quantitative information primarily from proofs in areas of mathematical analysis, and the mechanized synthesis of programs from proofs, which has found many concrete applications in discrete mathematics and computer science \cite{BergLFS(2015.0),BergMSS(2011.0),SchSeiWies(2015.0)}. 

Though as far back as the 1950s Kreisel already discusses the use of proof theoretic techniques to extract quantitative information from proofs in abstract algebra \cite{Kreisel(1958.0)}, specifically Hilbert's 17th problem together with his \emph{Nullstellensatz}, to date there are comparatively few formal applications of proof interpretations in algebra, 
the computational analysis of which is done largely on a case by case basis. 
This typically involves replacing \emph{semantic} conservation theorems with appropriate \emph{syntactic} counterparts both sufficient for proofs of elementary statements and provable by elementary means. 
This method has proved possible in numerous different settings \cite{cedcoq:entaildistr,ced:heine,mulvey:globhb,neg:ord,rinwes:sik,wes:ogc}, 
and in the context of \emph{commutative} algebra the so-called dynamical method is especially dominant 
\cite{cos:dyn,lombardiquitte:constructive,yengui:maximal,yengui:constructive}.\footnote{ 
The second author has contributed to a universal conservation criterion
\cite{rin:edde,rin:edde:full,rin:cuts} that includes many of the those cases \cite{rin:ukl,sch:hah,wes:ext}.} 
In dynamical algebra one deals with a supposed ideal object (such as a maximal ideal) 
only by means of concrete, finitary approximations (such as finitely generated ideals, or rather the finite sets of generators), 
where the latter provide partial but sufficiently complete information about the former. 



Interestingly, the idea of replacing ideal objects with suitable finitary approximations is already implicit in Kreisel's unwinding program, and is captured by his famous no-counterexample interpretation (n.c.i.). The n.c.i. plays an important role in proof mining, where in particular it corresponds to the notion of \emph{metastability} \cite{Kohlenbach(2005.1),KohKou(2015.0),KohLeu(2012.0)}, which has been made popular by Tao \cite{Tao(2008.1)} and more recently has featured in higher order computability theory \cite{Sanders(2018.0)}. 

In this article, we take a new approach to eliminating ideal objects in abstract algebra, by solving an appropriate metastable reformulation of our general maximality principle. We then use this solution to extract \emph{direct witnesses} from a variant of Krull's lemma.

The novelty of our approach lies not just in our use of the n.c.i., but in our description of its solution as a state based algorithm, 
inspired by recent work of the first author \cite{OliPow(2015.0),OliPow(2017.0),Powell(2013.0),Powell(2016.0),Powell(2018.3)} 
which focuses on the algorithmic meaning of extracted programs. This form of presentation allows us to bridge the gap between the \emph{rigorous} 
extraction of programs from proofs as terms in some formal calculus, and the more algorithmic style of dynamical algebra. 

It also enables us to present our results in an entirely self-contained manner, without needing to introduce any heavy proof theoretic machinery. Though behind the scenes at least, aspects of our work are influenced by G\"{o}del's functional interpretation \cite{Goedel(1958.0)} and Spector's bar recursion \cite{Spector(1962.0)}, neither of these make an official appearance, and we have endeavoured to make everything as accessible to the non-specialist as possible.

Our first main contribution, given as Theorems \ref{thm-verify} and \ref{thm-termination}, is a time sequential algorithm (in the sense of Gurevich \cite{Gurevich(2000.0)}), whose states evolve step by step until they terminate in some final state $s_j$ which represents a solution to the n.c.i. of Theorem \ref{thm-zorn}. Each step in this process represents an \emph{improvement} to our construction of an approximate ideal object, and so can also be viewed as a learning procedure in the style of \cite{AscBer(2010.0)}.

We then present a concrete application of our abstract result, in which we analyse a classic maximality argument used to prove the well known fact 
that in any commutative ring, if some element $r$ is contained in intersection of all prime ideals, then it must be nilpotent. We show that an instance 
of our sequential algorithm can be used to directly compute an exponent $e>0$ such that $r^e=0$, and thus our case study is another illustration of how the proof 
theoretic analysis of a highly nonconstructive proof can yield direct, computational information. We conclude by instantiating our algorithm in case of nonconstant coefficients of invertible polynomials. This is a well known example which has been widely studied 
from a computational perspective \cite{per:spe,rich:trivial,LICS2012,LMCS2013}, thus facilitating a future analysis of our work with other approaches.


\section{A general maximality argument}

We begin by presenting our abstract maximality principle, which forms the main subject of the paper. 
Let $X$ be some set (which for now is arbitrary but later will be countable), and denote by $\psfin(X)$ 
the set of all finite subsets of $X$. Simple lemmas are stated without proof.

\begin{definition}
\label{def-gen}
Let $\rhd$ be some subset of $\psfin(X)\times X$. We treat $\rhd$ as a binary relation and say that the element $x$ is generated by the finite set $A$ whenever $A\rhd x$. We extend $\rhd$ to arbitrary (not necessarily finite) $S\subseteq X$ by defining $S\rhds x$ whenever there exists some finite $A\subseteq S$ such that $A\rhd x$.
\end{definition}

\begin{definition}
\label{def-closure}
Given some $S\subseteq X$, define the sequence $(S_i)_{i\in\NN}$ of sets by
\begin{equation*}
S_0:=S\mbox{ \ \ \ and \ \ \ }S_{i+1}:=\{x\; | \; \bigcup_{j\leq i}S_j\rhds x\}
\end{equation*}
and let $\closure{S}:=\bigcup_{i\in\NN}S_i$. We say that $\closure{S}$ is the closure of $S$ w.r.t. $\rhd$, since whenever $\closure{S}\rhds x$ then $x\in \closure{S}$. 
\end{definition}



\begin{definition}
\label{def-plus}
For any $S\subseteq X$ and $x\in X$, $S\oplus x:=\closure{S\cup\{x\}}$ denotes the closed extension of $S$ with $x$.
\end{definition}

\begin{lemma}
\label{lem-plus}
Suppose that $S\rhds x$. Then $S\oplus x=\closure{S}$.
\end{lemma}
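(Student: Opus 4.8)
The plan is to prove the two inclusions $\closure{S} \subseteq S\oplus x$ and $S\oplus x \subseteq \closure{S}$ separately; throughout I write $T := S\cup\{x\}$, so that $S\oplus x = \closure{T}$, and I write $(S_i)$ and $(T_i)$ for the sequences from Definition \ref{def-closure} associated to $S$ and $T$. For the inclusion $\closure{S}\subseteq\closure{T}$ I would first record the (unstated but routine) monotonicity of the closure operator: if $S\subseteq T$ then $\closure{S}\subseteq\closure{T}$. This follows by induction on $i$, showing $S_i\subseteq T_i$ for all $i$; the step only uses that $\rhds$ is monotone in its left argument, which is immediate, since every finite subset of $S$ is a finite subset of $T$. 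Since $S\subseteq S\cup\{x\}=T$, monotonicity gives $\closure{S}\subseteq\closure{T}=S\oplus x$.

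For the converse inclusion $\closure{T}\subseteq\closure{S}$ the two key facts are that $x\in\closure{S}$ and that $\closure{S}$ is closed under $\rhds$. The first holds because $S\rhds x$ is exactly $\bigcup_{j\leq 0}S_j \rhds x$, so $x\in S_1\subseteq\closure{S}$; together with the trivial inclusion $S=S_0\subseteq\closure{S}$ this yields $T=S\cup\{x\}\subseteq\closure{S}$. The second is precisely the property recorded in Definition \ref{def-closure}, namely that $\closure{S}\rhds y$ implies $y\in\closure{S}$. Now I prove by induction on $i$ that $T_i\subseteq\closure{S}$. The base case $T_0=T\subseteq\closure{S}$ was just established. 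For the step, the induction hypothesis gives $T_j\subseteq\closure{S}$ for all $j\leq i$, hence $\bigcup_{j\leq i}T_j\subseteq\closure{S}$; if $y\in T_{i+1}$ then $\bigcup_{j\leq i}T_j\rhds y$, so by monotonicity of $\rhds$ also $\closure{S}\rhds y$, and therefore $y\in\closure{S}$ by closure. Taking the union over $i$ gives $\closure{T}\subseteq\closure{S}$, completing the proof.

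I do not expect a genuine obstacle here; the computation is essentially bookkeeping with the definitions. The only point that requires a little care is that the sequence $(S_i)$ need not be increasing, so both inductions must be run against the cumulative unions $\bigcup_{j\leq i}S_j$ appearing in Definition \ref{def-closure} — that is, the induction hypothesis has to be carried for all $j\leq i$ rather than merely for $j=i$. Once this is observed, everything goes through as above.
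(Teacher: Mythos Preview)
Your argument is correct: both inclusions follow exactly as you outline, via monotonicity of the closure operator for $\closure{S}\subseteq S\oplus x$, and via $x\in S_1\subseteq\closure{S}$ together with closure of $\closure{S}$ under $\rhds$ for the converse. The paper itself does not prove this lemma (it declares that simple lemmas are stated without proof), so there is nothing to compare against; your write-up is a clean unpacking of the definitions, and your remark about carrying the induction hypothesis over all $j\leq i$ rather than just $j=i$ is the only point worth flagging.
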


\begin{definition}
Let $Q(x)$ be some predicate on $X$. For $S\subseteq X$ write $Q(S)$ for $(\forall x\in S) Q(x)$. Note in particular that $Q(S)$ and $S\supseteq T$ implies $Q(T)$.
\end{definition}

\begin{definition}
\label{def-maximal}
We say that $M\subseteq X$ is maximal w.r.t. $\rhd$ and $Q$ if 
\begin{enumerate}[(i)]

\item $M$ is closed w.r.t. $\rhds$,

\item $Q(M)$,

\item $\neg Q(M\oplus x)$ for any $x\notin M$.

\end{enumerate}
   
\end{definition}

\begin{theorem}
\label{thm-zorn}
Suppose that $Q(\closure{\emptyset})$. Then there exists some $M\subseteq X$ which is maximal w.r.t. $\rhd$ and $Q$.
\end{theorem}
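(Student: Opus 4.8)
The plan is to apply Zorn's lemma to an appropriately chosen partial order. Let $\mathcal{F}$ denote the family of all subsets $S\subseteq X$ that are closed w.r.t.\ $\rhds$ and satisfy $Q(S)$, ordered by inclusion. By hypothesis $\closure{\emptyset}\in\mathcal{F}$: it is closed by Definition \ref{def-closure}, and $Q(\closure{\emptyset})$ holds by assumption, so $\mathcal{F}$ is nonempty. I claim every chain $\mathcal{C}\subseteq\mathcal{F}$ has an upper bound in $\mathcal{F}$, namely $U:=\bigcup\mathcal{C}$ (taking $U=\closure{\emptyset}$ when $\mathcal{C}=\emptyset$). Clearly $U$ is an upper bound for $\mathcal{C}$, and $Q(U)$ holds since any $x\in U$ lies in some $S\in\mathcal{C}$ with $Q(S)$. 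The point requiring the hypotheses is that $U$ is again closed: if $U\rhds x$ then by Definition \ref{def-gen} there is a \emph{finite} $A\subseteq U$ with $A\rhd x$; since $\mathcal{C}$ is a chain and $A$ is finite, $A\subseteq S$ for some single $S\in\mathcal{C}$, whence $S\rhds x$ and so $x\in S\subseteq U$ by closedness of $S$.

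Zorn's lemma then yields a maximal element $M\in\mathcal{F}$. Conditions (i) and (ii) of Definition \ref{def-maximal} hold by construction. For (iii), suppose towards a contradiction that $x\notin M$ but $Q(M\oplus x)$. By Definition \ref{def-plus}, $M\oplus x=\closure{M\cup\{x\}}$, which is closed w.r.t.\ $\rhds$ by Definition \ref{def-closure}; together with $Q(M\oplus x)$ this gives $M\oplus x\in\mathcal{F}$. But $M\subseteq M\cup\{x\}\subseteq\closure{M\cup\{x\}}=M\oplus x$ while $x\in(M\oplus x)\setminus M$, so $M\subsetneq M\oplus x$, contradicting the maximality of $M$ in $\mathcal{F}$. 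Hence $\neg Q(M\oplus x)$ for every $x\notin M$, and $M$ is maximal w.r.t.\ $\rhd$ and $Q$.

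The only genuine obstacle is verifying that the union of a chain of closed sets is closed, and this is precisely where the finitary nature of $\rhd$ — that its premises range over $\psfin(X)$ — is indispensable; the rest is a routine instance of Zorn's lemma. It should be stressed that this argument is entirely nonconstructive, and serves here only as the classical benchmark whose computational content the remainder of the paper is devoted to extracting.
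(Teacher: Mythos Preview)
Your proof is correct and follows essentially the same route as the paper: apply Zorn's lemma to the family of closed subsets satisfying $Q$, then derive (iii) from maximality since $M\oplus x$ is closed and properly contains $M$. If anything, you are more careful than the paper in spelling out why the union of a chain is again closed (invoking the finiteness built into $\rhd$) and in treating arbitrary rather than ostensibly countable chains.
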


\begin{proof}
Define $\mathcal{S}:=\{S\subseteq X\; | \; \mbox{$S$ is closed w.r.t $\rhd^\ast$ and $Q(S)$}\}$. We show that $\mathcal{S}$ is nonempty and chain complete w.r.t. set inclusion. Nonemptyness follows from the fact that $\closure{\emptyset}\in \mathcal{S}$, so it remains to prove chain completeness. Let $S_0\subseteq S_1\subseteq\ldots$ be a chain in $\mathcal{S}$. Then $S:=\bigcup_{i\in\NN} S_i$ is clearly closed, and moreover, if $x\in S$ then $x\in S_j$ for some $j$, and therefore $Q(x)$. This establishes $S\in\mathcal{S}$.

Thus by Zorn's lemma, $\mathcal{S}$ has some maximal element $M$, which by definition satisfies (i) and (ii). But for $x\notin M$ we have $M\subset M\oplus x$ and thus $M\oplus x\notin \mathcal{S}$. But since $M\oplus x$ is closed, it follows that $\neg Q(M\oplus x)$.
\end{proof}

\begin{corollary}
\label{cor-ideal}
Any commutative ring $X$ with $0\neq 1$ has a maximal ideal.
\end{corollary}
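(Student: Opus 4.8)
The plan is to instantiate Theorem \ref{thm-zorn} with $X$ the ring itself and a suitable choice of $\rhd$ and $Q$. Take $A \rhd x$ to hold precisely when $x$ lies in the ideal generated by the finite set $A$, that is, $x = \sum_{a \in A} r_a a$ for some coefficients $r_a \in X$ (so that in particular $\emptyset \rhd 0$, coming from the empty sum). Then for arbitrary $S \subseteq X$ the relation $S \rhds x$ says exactly that $x$ belongs to the ideal generated by $S$, and a short computation shows that $\closure{S}$ is nothing but that ideal: since $S_1 = \{x \mid S \rhds x\}$ is already an ideal containing $S$, we get $S_i = S_1$ for all $i \geq 1$, hence $\closure{S} = S_1$. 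In particular $\closure{\emptyset} = \{0\}$, and $S$ is closed w.r.t. $\rhds$ if and only if $S$ is an ideal of $X$, the three witnessing instances being closure under the empty sum, under binary sums, and under scalar multiples.

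For the predicate, set $Q(x) :\equiv x \neq 1$, so that $Q(S)$ expresses $1 \notin S$, i.e. (for an ideal $S$) that $S$ is proper. The hypothesis $Q(\closure{\emptyset})$ of Theorem \ref{thm-zorn} then reduces to the assumption $1 \neq 0$. Applying the theorem yields some $M \subseteq X$ that is maximal w.r.t. $\rhd$ and $Q$: by (i) and (ii), $M$ is a proper ideal; by (iii), for every $x \notin M$ we have $1 \in M \oplus x$.

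It then remains to check that such an $M$ is a maximal ideal in the usual sense. Suppose $I$ is a proper ideal with $M \subsetneq I$ and pick $x \in I \setminus M$. Since $I$ is an ideal containing $M \cup \{x\}$, it contains the ideal $M \oplus x$ generated by that set; but by (iii) we have $1 \in M \oplus x \subseteq I$, contradicting properness of $I$. Hence no proper ideal strictly extends $M$, so $M$ is maximal. I do not expect a genuine obstacle here: the argument is essentially bookkeeping, and the only points needing care are the identification of $\closure{\cdot}$ with ideal generation and of the abstract maximality clause (iii) with the non-existence of a strictly larger proper ideal.
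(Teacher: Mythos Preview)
Your proof is correct and follows essentially the same route as the paper: the same choice of $\rhd$ (finite $X$-linear combinations) and $Q(x)\equiv x\neq 1$, the identification of $\closure{S}$ with the ideal generated by $S$, and the same final step showing that any ideal strictly containing $M$ must contain $1$. The only differences are cosmetic---you spell out a few details (e.g.\ why $\closure{S}$ stabilises at $S_1$) that the paper leaves implicit.
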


\begin{proof}
We follow the standard proof. Define $\rhd$ by $A\rhd x$ iff $x=x_1\cdot a_1+\ldots +x_k\cdot a_k$ for some $a_1,\ldots,a_k\in A$ and $x_1,\ldots,x_k\in X$. In addition, define $Q(x):\equiv (x\neq 1)$. Then $S\subseteq X$ is closed iff it is an ideal, with $Q(S)$ iff $S$ is proper. Now $\closure{\emptyset}=\{0\}$ (since $\emptyset\rhd 0$) and if $0\neq 1$ then $Q(\{0\})$, thus by Theorem \ref{thm-zorn} there exists some maximal structure $M$. To see that $M$ is a maximal ideal, if there were some $M\subset I\subseteq X$ then we would have $M\subset M\oplus x\subseteq I$ for some $x\notin M$, and by $\neg Q(M\oplus x)$ we would have $1\in M\oplus x$ and thus $I=X$.
\end{proof}

\section{A logical analysis of Theorem \ref{thm-zorn}}

From now on, we assume that $X$ is countable and comes equipped with some explicit enumeration $\{x_n\; | \; n\in\NN\}$. Given some $S\subseteq X$, the initial segment of $S$ of length $n$ is defined by $\initSeg{S}{n}:=S\cap\{x_m\; | \; m<n\}$. Note that $S=\bigcup_{n\in\NN}\initSeg{S}{n}$. We define $\dom(S)\subseteq \NN$ by $\dom(S):=\{n\in\NN\; | \; x_n\in S\}$.

\begin{theorem}
\label{thm-dc}
Suppose that $M\subseteq X$ satisfies
\begin{equation}
\label{eqn-dcideal}
x_n\in M\Leftrightarrow Q(\initSeg{M}{n}\oplus x_n)
\end{equation}
for all $n\in\NN$. If $Q(\closure{\emptyset})$ then $M$ is maximal w.r.t. $\rhd$ and $Q$. 
\end{theorem}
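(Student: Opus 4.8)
The plan is to verify conditions (i)--(iii) of Definition \ref{def-maximal} for $M$, using the equivalence \eqref{eqn-dcideal} together with the hypothesis $Q(\closure{\emptyset})$. The key observation is that \eqref{eqn-dcideal} ties membership of $x_n$ in $M$ to a property of the initial segment $\initSeg{M}{n}$, so that $M$ is, in effect, built up ``one element at a time'' along the enumeration, and at each stage an element is admitted precisely when doing so is compatible with $Q$.

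First I would establish (ii), namely $Q(M)$. Since $M = \bigcup_n \initSeg{M}{n}$, it suffices to show $Q(\initSeg{M}{n})$ for every $n$, which I would prove by induction on $n$. For the base case, $\initSeg{M}{0} = \emptyset \subseteq \closure{\emptyset}$, so $Q(\initSeg{M}{0})$ follows from $Q(\closure{\emptyset})$ by monotonicity of $Q$. For the inductive step, assume $Q(\initSeg{M}{n})$; the set $\initSeg{M}{n+1}$ equals either $\initSeg{M}{n}$ (if $x_n \notin M$) or $\initSeg{M}{n} \cup \{x_n\}$ (if $x_n \in M$). In the first case we are done; in the second, \eqref{eqn-dcideal} gives $Q(\initSeg{M}{n}\oplus x_n)$, and since $\initSeg{M}{n} \cup \{x_n\} \subseteq \closure{\initSeg{M}{n}\cup\{x_n\}} = \initSeg{M}{n}\oplus x_n$, monotonicity of $Q$ yields $Q(\initSeg{M}{n}\cup\{x_n\}) = Q(\initSeg{M}{n+1})$.

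Next I would establish (i), that $M$ is closed w.r.t. $\rhds$. Suppose $M \rhds x_n$ for some $n$; since $\rhds$ refers only to a finite subset of $M$, and every finite subset of $M$ lies in some $\initSeg{M}{m}$, I would argue (using that the sets $\initSeg{M}{m}$ are increasing and, by the argument for closedness, that their closures behave coherently) that in fact $\initSeg{M}{n} \oplus x_n \supseteq \closure{\initSeg{M}{n}}$ via Lemma \ref{lem-plus} once we know $\initSeg{M}{n} \rhds x_n$ --- the slightly delicate point being to pass from $M \rhds x_n$ to $\initSeg{M}{n} \rhds x_n$, for which I expect we need that $M$ is actually the closure of each of its initial segments in the appropriate sense, so that generators of $x_n$ drawn from later in the enumeration can be ``pulled back''. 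Granting this, $\initSeg{M}{n}\oplus x_n = \closure{\initSeg{M}{n}}$, and since $\initSeg{M}{n} \subseteq M$ with $Q(M)$ we get $Q(\initSeg{M}{n})$, hence $Q(\closure{\initSeg{M}{n}}) = Q(\initSeg{M}{n}\oplus x_n)$ (closure does not enlarge past what $Q$ already controls --- more precisely $Q$ of a closed superset follows because... this is where care is needed), and then \eqref{eqn-dcideal} forces $x_n \in M$. Finally, for (iii), given $x_n \notin M$, \eqref{eqn-dcideal} yields $\neg Q(\initSeg{M}{n}\oplus x_n)$, and since $\initSeg{M}{n} \subseteq M$ implies $M \oplus x_n \supseteq \initSeg{M}{n} \oplus x_n$ (closures are monotone), monotonicity of $Q$ in the contrapositive gives $\neg Q(M\oplus x_n)$, as required.

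The main obstacle I anticipate is precisely the closedness argument (i): showing that if $M$ satisfies the local, segment-wise condition \eqref{eqn-dcideal} then it is genuinely closed under $\rhds$ globally. This requires an interplay between the enumeration order and the closure operator --- one must rule out the pathology where $x_n$ is generated by elements of $M$ that appear after position $n$ in the enumeration, yet $x_n$ itself was rejected at stage $n$. I expect the resolution uses an induction showing $\closure{\initSeg{M}{n}} \subseteq M$ for all $n$ (so that $M$ is closed), simultaneously with, or as a consequence of, the characterization \eqref{eqn-dcideal}; the monotonicity remarks about $Q$ and about $\closure{\cdot}$, together with Lemma \ref{lem-plus}, should be the only other ingredients.
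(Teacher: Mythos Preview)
Your treatment of (ii) and (iii) is essentially the paper's, and is correct. The gap you yourself flag in (i) is genuine, and neither of the resolutions you sketch will close it. You cannot in general pass from $M\rhds x_n$ to $\initSeg{M}{n}\rhds x_n$: the witnessing generators may all lie beyond position $n$, and there is no mechanism for ``pulling them back''. Your alternative, proving $\closure{\initSeg{M}{n}}\subseteq M$ by induction, is essentially the closure statement itself and you have not indicated how the induction step would go. Relatedly, your parenthetical remark that $Q(\initSeg{M}{n})$ should give $Q(\closure{\initSeg{M}{n}})$ is simply false in general: the closure introduces new elements, and $Q$ need not propagate to them.

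The fix is a small but decisive strengthening of your induction in (ii): prove $Q(M_n)$ for $M_n:=\closure{\initSeg{M}{n}}$, not merely $Q(\initSeg{M}{n})$. Your own inductive step already delivers this, since in the case $x_n\in M$ you obtain $Q(\initSeg{M}{n}\oplus x_n)=Q(M_{n+1})$, and in the case $x_n\notin M$ you have $M_{n+1}=M_n$. With $Q(M_k)$ available for all $k$, the closure argument proceeds without any pullback: suppose $M\rhds x_n$ but $x_n\notin M$, so $\neg Q(\initSeg{M}{n}\oplus x_n)$, and pick $k$ with $\initSeg{M}{k}\rhds x_n$. If $k\le n$ then also $\initSeg{M}{n}\rhds x_n$, so Lemma~\ref{lem-plus} gives $\initSeg{M}{n}\oplus x_n=M_n$, contradicting $Q(M_n)$. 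If $k>n$ then $\initSeg{M}{n}\oplus x_n\subseteq\initSeg{M}{k}\oplus x_n$, whence $\neg Q(\initSeg{M}{k}\oplus x_n)$; but Lemma~\ref{lem-plus} applied at $k$ gives $\initSeg{M}{k}\oplus x_n=M_k$, contradicting $Q(M_k)$. The point is to derive the contradiction at position $\max(n,k)$ rather than at $n$.
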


\begin{proof}
Let $M_n:=\closure{\initSeg{M}{n}}$. We first observe that $Q(M_n)$ for all $n\in\NN$, which follows by induction: For $n=0$ we have $M_0=\closure{\emptyset}$ and so $Q(M_0)$ is true by assumption. Now supposing that $Q(M_n)$ holds for some $n\in\NN$ there are two possibilities: If $Q(\initSeg{M}{n}\oplus x_n)$ then $x_n\in M$ and hence $M_{n+1}=\closure{\initSeg{M}{n}\cup\{x_n\}}=\initSeg{M}{n}\oplus x_n$, and if $\neg Q(\initSeg{M}{n}\oplus x_n)$ then $x_n\notin M$ and hence $M_{n+1}=\closure{\initSeg{M}{n}}=M_n$. Either way we have $Q(M_{n+1})$.

We now establish each of the maximality conditions in turn. For closure, suppose that $M\rhds x_n$ but $x_n\notin M$, and so by definition $\neg Q(\initSeg{M}{n}\oplus x_n)$. Since $M\rhds x_n$ we have $\initSeg{M}{k}\rhds x_n$ for some $k\in\NN$. 
First, let $k\leq n$. Then $\initSeg{M}{k}\subseteq \initSeg{M}{n}$ and thus $\initSeg{M}{n}\rhds x_n$, which implies that $x_n\in M_n$ and thus by Lemma \ref{lem-plus} 
\begin{equation*}
\initSeg{M}{n}\oplus x_n=\closure{\initSeg{M}{n}}=M_n. 
\end{equation*}
Since $Q(M_n)$ this contradicts $\neg Q(\initSeg{M}{n}\oplus x_n)$. 
But if $n<k$ then $\initSeg{M}{n}\oplus x_n\subseteq \initSeg{M}{k}\oplus x_n$ and thus $\neg Q(\initSeg{M}{n}\oplus x_n)$ implies $\neg Q(\initSeg{M}{k}\oplus x_n)$. But $\initSeg{M}{k}\rhds x_n$ and thus by Lemma \ref{lem-plus} again, $\initSeg{M}{k}\oplus x_n=M_k$, contradicting $Q(M_k)$.

That $Q(M)$ holds is straightforward: For if $x_n\in M$ then $x_n\in \initSeg{M}{n+1}\subseteq M_{n+1}$ and thus $Q(x_n)$ follows from $Q(M_{n+1})$. Finally, to show that $\neg Q(M\oplus x_n)$ for $x_n\notin M$, note that $x_n\notin M$ implies $\neg Q(\initSeg{M}{n}\oplus x_n)$, and since $\initSeg{M}{n}\oplus x_n\subseteq M\oplus x_n$ the result follows.
\end{proof}

The purpose of the above theorem was to give a more syntactic formulation of Theorem \ref{thm-zorn} in the countable setting: If $Q(\closure{\emptyset})$ then the existence of a some maximal $M\subseteq X$ is implied by the existence of some $M$ satisfying (\ref{eqn-dcideal}). In order to proceed, we will now take a 
closer look at the structure of (\ref{eqn-dcideal}) and make some restrictions on the logical complexity of certain parameters.

\begin{lemma}
\label{lem-tree}
Suppose that the relation $A\rhd x$ can be encoded as a $\Sigma^0_1$-formula. Then the membership relation $x\in\closure{A}$ can also be encoded as a $\Sigma^0_1$-formula.
\end{lemma}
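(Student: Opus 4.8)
The plan is to unfold the iterative definition of $\closure{A}$ from Definition~\ref{def-closure} and show that membership in it is witnessed by a finite amount of data, which can then be existentially quantified. Recall that $x \in \closure{A}$ iff $x \in S_i$ for some $i \in \NN$, where $S_0 = A$ and $S_{i+1} = \{ x \mid \bigcup_{j \leq i} S_j \rhds x \}$. The point is that to certify $x \in \closure{A}$ one does not need the full sets $S_i$, but only a finite \emph{derivation tree}: a finite sequence (or tree) of elements $y_1, \dots, y_m = x$ together with, for each $y_\ell$, either a witness that $y_\ell \in A$, or a finite set $B_\ell \subseteq \{ y_1, \dots, y_{\ell-1} \} \cup A$ with $B_\ell \rhd y_\ell$. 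This is the standard observation that an inductively generated closure has $\Sigma^0_1$ membership provided the generating relation is $\Sigma^0_1$, which is exactly why the lemma is phrased the way it is.

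Concretely, I would first prove by induction on $i$ that $x \in S_i$ iff there is a finite such derivation of $\rhd$-height at most $i$; the base case is $x \in A$, and the inductive step uses that $\bigcup_{j \leq i} S_j \rhds x$ means there is a finite $C \subseteq \bigcup_{j \leq i} S_j$ with $C \rhd x$, and each element of $C$ lies in some $S_j$ with $j \leq i$, hence (by induction hypothesis) has a derivation of height $\leq i$; appending $x$ gives a derivation of height $\leq i+1$. Then $x \in \closure{A}$ iff such a finite derivation exists for some height, i.e. iff $\exists$ (a code for a finite derivation) which is valid. The validity of a candidate derivation is a finite conjunction of conditions of the form ``$n \in \dom(A)$'' (a $\Sigma^0_1$ condition on $A$, or simply an atomic condition if $A$ is finite as in the statement) and ``$B_\ell \rhd y_\ell$'' (each a $\Sigma^0_1$-formula by hypothesis). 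A finite conjunction of $\Sigma^0_1$-formulas is $\Sigma^0_1$, and prefixing a single existential quantifier over a natural-number code for the derivation keeps the formula $\Sigma^0_1$. Hence $x \in \closure{A}$ is $\Sigma^0_1$.

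The one genuinely delicate point — and the step I expect to be the main obstacle to state cleanly rather than hard mathematically — is the bookkeeping needed to make ``there exists a finite derivation tree'' literally a single arithmetical existential quantifier: one must fix a G\"odel coding of finite labelled trees (or finite sequences of pairs $(y_\ell, B_\ell)$) by natural numbers, and check that the predicate ``$e$ codes a well-formed derivation of $x$ from $A$ w.r.t.\ $\rhd$'' has the right complexity. Since each clause to be verified is either decidable, or of the form $n \in \dom(A)$, or an instance of the $\Sigma^0_1$ relation $A \rhd x$ with the existential witness for that instance also needing to be packaged into the code $e$, the claim reduces to the closure of $\Sigma^0_1$ under finite conjunction and bounded quantification, together with the standard fact that $\exists e\, \exists \vec{w}\, (\dots)$ contracts to a single $\exists$. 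I would state this packaging explicitly but not belabour the coding details, since they are entirely routine. (Note also that in the intended application $A$ ranges over finite sets $\initSeg{M}{n}$, so the ``$n \in \dom(A)$'' clauses are in fact decidable, which only makes matters easier; I would nonetheless prove the slightly more general statement as phrased.)
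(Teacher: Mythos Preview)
Your proposal is correct and follows essentially the same approach as the paper: both observe that $x\in\closure{A}$ holds iff there exists a finite derivation tree for $x$ with leaves in $A$ and internal nodes instances of $\rhd$, and that the existence of such a tree is $\Sigma^0_1$ once $\rhd$ is. The paper's own proof is a two-sentence sketch of exactly this idea, whereas you have (helpfully) spelled out the induction on the stage $i$ and the coding bookkeeping, but the underlying argument is the same.
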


\begin{proof}
We have $x\in \closure{A}$ iff there exists some finite derivation tree for $x$ whose leaves are elements of $A$ and whose nodes represent instances of $\rhd$. Given that $\rhd$ can be encoded as a $\Sigma^0_1$-formula, it is clear that the existence of a derivation trees can in turn be represented as $\Sigma^0_1$-formula via a suitable encoding. 
\end{proof}

\begin{lemma}
\label{lem-logcomp}
Suppose that $Q(x)$ is a $\Pi^0_1$-formula and that $A\rhd x$ can be encoded as a $\Sigma^0_1$-formula. Then $Q(\closure{A})$ is a $\Pi^0_1$-formula i.e. $Q(\closure{A})\Leftrightarrow (\forall p) R_A(p)$ for some decidable predicate $R_A(p)$ on $\psfin(A)\times\NN$.
\end{lemma}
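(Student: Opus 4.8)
The plan is to unfold the definitions of $Q(\closure{A})$ and $\closure{A}$, push the quantifiers to the front, and read off a decidable matrix. First I would recall from Lemma~\ref{lem-tree} that since $A\rhd x$ is $\Sigma^0_1$, the relation $x\in\closure{A}$ is $\Sigma^0_1$ as well: there is a decidable predicate $D_A(x,t)$ (``$t$ encodes a finite $\rhd$-derivation of $x$ from leaves in $A$'') with $x\in\closure{A}\Leftrightarrow(\exists t)D_A(x,t)$. Likewise, since $Q(x)$ is $\Pi^0_1$, fix a decidable $Q_0$ with $Q(x)\Leftrightarrow(\forall m)Q_0(x,m)$.

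Next I would compute the logical form of $Q(\closure{A})$. By definition $Q(\closure{A})\equiv(\forall x)(x\in\closure{A}\to Q(x))$, which unfolds to
\begin{equation*}
(\forall x)\bigl((\exists t)D_A(x,t)\to(\forall m)Q_0(x,m)\bigr).
\end{equation*}
Pulling the inner existential out of the antecedent as a universal and combining the two leading universal quantifiers into one over a product coding gives an equivalent $\Pi^0_1$-formula
\begin{equation*}
(\forall x)(\forall t)(\forall m)\bigl(D_A(x,t)\to Q_0(x,m)\bigr),
\end{equation*}
whose matrix is decidable. Finally I would massage this into exactly the shape demanded by the statement: the witness $x$ appearing in a derivation tree $t$ over leaves in $A$ is itself determined by $t$ (it is the root label), so a single parameter ranging over codes of pairs, a finite subset of $A$ (the set of leaves) together with a number, suffices; setting $p$ to encode such a pair and letting $R_A(p)$ be the corresponding instance of $D_A(x,t)\to Q_0(x,m)$ yields $Q(\closure{A})\Leftrightarrow(\forall p)R_A(p)$ with $R_A$ decidable on $\psfin(A)\times\NN$, as claimed.

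The one point requiring a little care — and the closest thing to an obstacle — is the bookkeeping that turns the nested $\forall x\,\exists t$ into a single first-order universal quantifier over $\psfin(A)\times\NN$ without smuggling in an unbounded search: this is exactly where Lemma~\ref{lem-tree} is needed, since it guarantees that membership in $\closure{A}$ is witnessed by a \emph{finite} object (a derivation tree) built from finitely many elements of $A$, so that quantifying over finite subsets of $A$ together with the tree structure captures all of $\closure{A}$. Everything else is routine manipulation of quantifier prefixes, and the decidability of $R_A$ is immediate from the decidability of $D_A$ and $Q_0$.
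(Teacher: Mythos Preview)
Your proposal is correct and follows essentially the same approach as the paper: invoke Lemma~\ref{lem-tree} to write membership in $\closure{A}$ as a $\Sigma^0_1$-formula, unfold $Q(\closure{A})$ as $(\forall x)(x\in\closure{A}\to Q(x))$, prenex, and code the resulting tuple of universal quantifiers into a single $(\forall p)$. The only cosmetic difference is that the paper quantifies over enumeration indices $m$ (writing $x_m$) rather than over elements $x$ directly, and is more casual about the final coding step (it simply says ``using some pairing function for the tuple $m,t,e$'' without trying to match the $\psfin(A)\times\NN$ domain literally, which in any case plays no role downstream).
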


\begin{proof}
We can write $Q(x)\Leftrightarrow (\forall e) Q_0(x,e)$ for some decidable $Q_0(x,e)$, and by Lemma \ref{lem-tree}, $x\in\closure{A}\Leftrightarrow (\exists t) G_A(x,t)$ for some decidable $G_A(x,t)$. Then 
\begin{equation*}
\begin{aligned}
Q(\closure{A})&\Leftrightarrow (\forall m)(x_m\in\closure{A}\Rightarrow Q(x_m))\\
&\Leftrightarrow (\forall m)((\exists t) G_A(x_m,t)\Rightarrow (\forall e) Q_0(x_m,e))\\
&\Leftrightarrow (\forall m,t,e)(G_A(x_m,t)\Rightarrow Q_0(x_m,e))
\end{aligned}
\end{equation*}
and the latter formula can be encoded as $(\forall p) R_A(p)$ for suitable $R_A(p)$ and using some pairing function for the tuple $m,t,e$.
\end{proof}

\begin{lemma}
Under the conditions of Lemma \ref{lem-logcomp}, (\ref{eqn-dcideal}) holds iff for all $n\in\NN$:
\begin{equation}
\label{eqn-dccomp}
x_n\in M\Leftrightarrow (\forall p)R_{\initSeg{M}{n}\cup\{x_n\}}(p)
\end{equation}
\end{lemma}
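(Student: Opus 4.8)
The plan is to derive the equivalence by a direct substitution into (\ref{eqn-dcideal}) using Lemma~\ref{lem-logcomp}. Fix $n\in\NN$ and set $A:=\initSeg{M}{n}\cup\{x_n\}$. This set is finite, being contained in $\{x_m\mid m\leq n\}$, and the hypotheses of Lemma~\ref{lem-logcomp} on $Q$ and on $\rhd$ are exactly those assumed here; hence Lemma~\ref{lem-logcomp} applies to $A$ and yields a decidable $R_A$ with $Q(\closure{A})\Leftrightarrow(\forall p)R_A(p)$.

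Next I would unfold $\oplus$: by Definition~\ref{def-plus}, $\initSeg{M}{n}\oplus x_n=\closure{\initSeg{M}{n}\cup\{x_n\}}=\closure{A}$, so the previous step reads $Q(\initSeg{M}{n}\oplus x_n)\Leftrightarrow(\forall p)R_{\initSeg{M}{n}\cup\{x_n\}}(p)$. The right-hand side of the biconditional (\ref{eqn-dcideal}) is precisely $Q(\initSeg{M}{n}\oplus x_n)$, while the right-hand side of (\ref{eqn-dccomp}) is precisely $(\forall p)R_{\initSeg{M}{n}\cup\{x_n\}}(p)$; since these are equivalent, the biconditional (\ref{eqn-dcideal}) holds for this particular $n$ exactly when (\ref{eqn-dccomp}) does. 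Quantifying over all $n\in\NN$ then gives the stated equivalence.

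I do not expect a genuine obstacle here, as the argument is essentially bookkeeping layered on top of Lemma~\ref{lem-logcomp}. The only point deserving a moment's care is that Lemma~\ref{lem-logcomp} must be applied to the specific set $\initSeg{M}{n}\cup\{x_n\}$ rather than to some abstract generic $A$; this is harmless because the proof of that lemma uses only the $\Sigma^0_1$ encoding of the closure membership relation (via Lemma~\ref{lem-tree}) and the $\Pi^0_1$ form of $Q$, neither of which is sensitive to how the finite set of generators is obtained.
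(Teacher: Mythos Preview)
Your argument is correct and is exactly the paper's approach: apply Lemma~\ref{lem-logcomp} with $A=\initSeg{M}{n}\cup\{x_n\}$, noting that $\closure{A}=\initSeg{M}{n}\oplus x_n$. The paper states this in a single line, while you have spelled out the bookkeeping in more detail; the remark about finiteness of $A$ is true but not actually needed, since Lemma~\ref{lem-logcomp} does not assume $A$ finite.
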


\begin{proof}
By Lemma \ref{lem-logcomp} setting $A=\initSeg{M}{n}\cup\{x_n\}$, so that $\closure{A}=\initSeg{M}{n}\oplus x_n$.
\end{proof}

Written out in full, the existence of some $M$ satisfying (\ref{eqn-dccomp}) becomes
\begin{equation*}
(\exists M)(\forall n)(x_n\in M\Rightarrow (\forall p) R_{\initSeg{M}{n}\cup\{x_n\}}(p) \wedge x_n\notin M\Rightarrow (\exists q) R_{\initSeg{M}{n}\cup\{x_n\}}(q))
\end{equation*}
and so written out in Skolem normal form, this becomes
\begin{equation}
\label{eqn-snf}
(\exists M,f)(\forall n,p)(x_n\in M\Rightarrow  R_{\initSeg{M}{n}\cup\{x_n\}}(p) \wedge x_n\notin M\Rightarrow  R_{\initSeg{M}{n}\cup\{x_n\}}(f(n))).
\end{equation}
This motivates our final version of maximality, which is now in a form where we can directly apply the no-counterexample interpretation.

\begin{definition}
\label{def-expmax}
An explicit maximal object w.r.t. $\rhd$ and $Q$ is a set $M\subseteq X$ together with a function $f:\dom(X\backslash M)\to \NN$ such that
\begin{itemize}

\item $x_n\in M\Rightarrow R_{\initSeg{M}{n}\cup\{x_n\}}(p)$

\item $x_n\notin M\Rightarrow \neg R_{\initSeg{M}{n}\cup\{x_n\}}(f(n))$

\end{itemize}
for all $n,p\in\NN$.
\end{definition}
The idea here is that the function $f$ provides concrete evidence for why $x_n$ is excluded from the maximal structure $M$: in other words, it encodes an element $x_{m}$ together with some tree $t$ and $e$ such that $x_{m}\in \initSeg{M}{n}\oplus x_n$ with respect to $t$ but $Q(x_{m})$ fails relative to $e$.

\section{An approximating algorithm for maximal objects}

In general, it is impossible to effectively compute a set $M$ together with an $f$ satisfying Definition \ref{def-expmax}. However, we demonstrate how an \emph{approximate}, or \emph{metastable}, formulation of maximality in the spirit of Kreisel's no-counterexample interpretation, can be directly witnessed via an intuitive stateful procedure.

For a detailed and modern account of the n.c.i., the reader is encouraged to consult e.g. \cite{Kohlenbach(1999.0),Kohlenbach(2008.0)}. The rough idea is the following: Given some prenex formula of the form $A:\equiv (\exists x\in X)(\forall y\in Y) P_0(x,y)$, a functional $\Phi:(X\to Y)\to X$ is said to witness the n.c.i. of $A$ if it witnesses $(\forall\omega:X\to Y)(\exists x) P_0(x,\omega(x))$ i.e. $(\forall \omega) P_0(\Phi \omega,\omega(\Phi \omega))$. This definition generalises in the obvious way to prenex formulas of arbitrary complexity. In this section, we give an algorithmic description of such an $\Phi$ for $A$ being the statement that an explicit maximal object exists, as in Definition \ref{def-expmax}.

\begin{definition}
\label{def-expappr}
Let $(\omega,\phi)$ be functionals which take as input $M$ and $f$ and return as output a tuple in $\NN^2$. An \emph{approximate} explicit maximal object w.r.t $\rhd$, $Q$ and $(\omega,\phi)$ is a set $M\subseteq X$ together with a function $f$ such that 
\begin{itemize}

\item $x_n\in M\Rightarrow R_{\initSeg{M}{n}\cup\{x_n\}}(p)$

\item $x_n\notin M\Rightarrow \neg R_{\initSeg{M}{n}\cup\{x_n\}}(f(n))$

\end{itemize}
but now only for $n\leq \omega(M,f)$ and $p=\phi(M,f)$.
\end{definition}
Note that Definition \ref{def-expappr} is slightly stronger than the n.c.i. (\ref{eqn-snf}), since it works for all $n\leq \omega(M,f)$ and not just $n=\omega(M,f)$.

Approximate maximal objects are useful because when a proof of a pure existential statement relies on the existence of some maximal $M$, we are typically able to find functionals $(\omega,\phi)$ which calibrate exactly how this maximal object is used, and thereby construct a witness to the existential statement in terms of an approximate maximal object relative to $(\omega,\phi)$. We will see an example of this in Section \ref{sec-case}.

\subsection{The algorithm}
\label{sec-comp-alg}

We now describe our algorithm which computes approximate maximal objects, as an intuitive state based computation $\{s_i\}_{i\in\NN}$. Here, each $s_i$ is a \emph{state}, which in this paper is defined to be a function of type $\NN\to\{(\ast)\}+\NN$ i.e. an array $s$, whose $n$th entry $s(n)$ is either a natural number or some default value $(\ast)$. The idea is that any given state encodes a current approximation to an explicit maximal object: For each state we define the set $M[s_i]\subseteq X$ as
\begin{equation*}
M[s_i]:=\{x_n\in\NN\; | \; s_i(n)=(\ast)\}
\end{equation*}
and the function $f[s_i]:\dom(X\backslash M[s_i])\to\NN$ by
\begin{equation*}
f[s_i](n):=s_i(n)\in\NN
\end{equation*}
where $s_i(n)\in\NN$ follows from the assumption that $n\notin M[s_i]$. Fixing functionals $(\omega,\phi)$, we assume for convenience that these now take as input states, and write e.g. $\omega(s_i)$ for $\omega(M[s_i],f[s_i])$. Define
\begin{equation*}
(n_i,p_i):=(\omega,\phi)(s_i).
\end{equation*}
We now describe how our state evolves. As an initial state, we set
\begin{equation*}
s_0:=\lambda n.(\ast)
\end{equation*}
and so $M[s_0]=X$ and $f[s_0]$ has an empty domain. Now, given that we are in the $i$th state, we carry out the following steps:
\begin{itemize}
\setlength\itemsep{1em}

\item Search from $0$ up to $n_i$ until some $0\leq n\leq n_i$ is found such that each of the following hold
\begin{itemize}
\setlength\itemsep{0em}

\item $x_n\in M[s_i]$,

\item $\neg R_{\initSeg{M[s_i]}{n}\cup\{x_n\}}(p_i)$

\end{itemize}

\item If no such $n$ is found, the algorithm terminates in state $s_i$.

\item Otherwise, define
\begin{equation*}
s_{i+1}:=\initSeg{s_i}{n}::p_i::\lambda k.(\ast)
\end{equation*}
(where $::$ denotes list concatenation) and so in particular, $M[s_{i+1}]=\initSeg{M[s_i]}{n}\cup \{x_k\in\NN \; | \; k>n\}$ and $x_n\notin M[s_{i+1}]$.

\end{itemize}

\begin{lemma}
\label{lem-domain}
For all states $s_i\in\NN$ and $n\in\NN$ we have
\begin{equation*}
x_n\notin M[s_i]\Rightarrow \neg R_{\initSeg{M[s_i]}{n}\cup\{x_n\}}(f[s_i](n)).
\end{equation*}
\end{lemma}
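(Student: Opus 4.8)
The plan is to prove the statement by induction on the state index $i$. For the base case $i=0$ I would note that $M[s_0]=X$, so the hypothesis $x_n\notin M[s_0]$ fails for every $n$ and the implication holds vacuously. For the induction step I would assume the claim for $s_i$; if the algorithm terminates in state $s_i$ there is no successor state and nothing further to check, so I may assume the search at stage $i$ has returned some index $n^\ast$ with $0\le n^\ast\le n_i$, $x_{n^\ast}\in M[s_i]$ and $\neg R_{\initSeg{M[s_i]}{n^\ast}\cup\{x_{n^\ast}\}}(p_i)$, and that $s_{i+1}=\initSeg{s_i}{n^\ast}::p_i::\lambda k.(\ast)$.

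The key preliminary observation I would isolate from this definition is that $s_{i+1}(m)=s_i(m)$ for every $m<n^\ast$, while $s_{i+1}(n^\ast)=p_i$ and $s_{i+1}(k)=(\ast)$ for $k>n^\ast$. From the first part it follows that $\initSeg{M[s_{i+1}]}{n}=\initSeg{M[s_i]}{n}$ for all $n\le n^\ast$, and that for $n<n^\ast$ one has $x_n\in M[s_{i+1}]\Leftrightarrow x_n\in M[s_i]$ with $f[s_{i+1}](n)=f[s_i](n)$ in the negative case; these are precisely the facts that let the induction hypothesis transfer along the update.

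Then I would fix an $n$ with $x_n\notin M[s_{i+1}]$ and split into three cases according to how $n$ compares to $n^\ast$. If $n>n^\ast$ then $s_{i+1}(n)=(\ast)$, i.e. $x_n\in M[s_{i+1}]$, contradicting the assumption, so this case does not arise. If $n=n^\ast$ then $f[s_{i+1}](n^\ast)=p_i$ and $\initSeg{M[s_{i+1}]}{n^\ast}=\initSeg{M[s_i]}{n^\ast}$, so the required $\neg R_{\initSeg{M[s_{i+1}]}{n^\ast}\cup\{x_{n^\ast}\}}(p_i)$ is exactly the search condition that triggered the transition to $s_{i+1}$. If $n<n^\ast$ then $s_i(n)=s_{i+1}(n)\in\NN$, hence $x_n\notin M[s_i]$, and the induction hypothesis gives $\neg R_{\initSeg{M[s_i]}{n}\cup\{x_n\}}(f[s_i](n))$; since $\initSeg{M[s_{i+1}]}{n}=\initSeg{M[s_i]}{n}$ and $f[s_{i+1}](n)=f[s_i](n)$, this is the desired conclusion.

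I do not expect a genuine obstacle here: the whole argument reduces to the bookkeeping fact that a step of the algorithm overwrites the state only at positions $\ge n^\ast$, leaving the truncation below $n^\ast$ — and hence the relevant $M$-initial-segments and $f$-values — untouched. The single point requiring a little care is the case $n=n^\ast$, where one must appeal to exactly the search predicate that caused the update rather than to the induction hypothesis.
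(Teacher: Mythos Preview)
Your proof is correct and follows essentially the same approach as the paper: induction on $i$, with the base case vacuous because $M[s_0]=X$, and the induction step split according to whether $n$ lies below, at, or above the update index $n^\ast$ (the paper calls it $n'$ and simply omits the trivially impossible case $n>n'$). The only difference is that you spell out the bookkeeping a bit more carefully, including the termination case and the $n>n^\ast$ case, but the argument is the same.
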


\begin{proof}
Induction on $i$. For $i=0$ the statement is trivially true, since $M[s_0]=X$. So suppose the statement is true for some $i$, and that $x_n\notin M[s_{i+1}]$. Because $M[s_{i+1}]=\initSeg{M[s_i]}{n'}\cup\{x_k\in\NN\; | \; k>n'\}$ for some $n'\leq n_i$ there are two possibilities: either $n<n'$ and $x_n\notin M[s_i]$ and so the result follow by the induction hypothesis since $f[s_{i+1}](n)=s_{i+1}(n)=s_i(n)=f[s_i](n)$ and $\initSeg{M[s_{i+1}]}{n}=\initSeg{M[s_i]}{n}$, or $n=n'$ and so $f[s_{i+1}(n)]=p_i$ which is defined to satisfy $\neg R_{\initSeg{M[s_i]}{n}\cup\{x_n\}}(p_i)$, and thus the result follows since $\initSeg{M[s_{i+1}]}{n}=\initSeg{M[s_i]}{n}$.
\end{proof}

\begin{theorem}
\label{thm-verify}
Suppose that the algorithm terminates in state $s_j$. Then $s_j$ forms an approximate explicit maximal object w.r.t. $\rhd,Q$ and $(\omega,\phi)$.
\end{theorem}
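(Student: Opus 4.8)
The plan is to simply check the two defining clauses of Definition \ref{def-expappr} in turn, taking $M := M[s_j]$, $f := f[s_j]$, and $(n_j,p_j) := (\omega,\phi)(s_j)$, so that what must be shown is that for all $n\leq n_j$ we have $x_n\in M[s_j]\Rightarrow R_{\initSeg{M[s_j]}{n}\cup\{x_n\}}(p_j)$ and $x_n\notin M[s_j]\Rightarrow \neg R_{\initSeg{M[s_j]}{n}\cup\{x_n\}}(f[s_j](n))$.

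The second clause is free: it is exactly Lemma \ref{lem-domain}, which was deliberately proved in the stronger form "for every $n\in\NN$" rather than just "for $n\leq n_j$", so it can be invoked here with no further work. Hence the entire content of the theorem reduces to the first clause, and this is where the termination hypothesis is used. By construction the algorithm halts in state $s_j$ precisely when the bounded search over $0\leq n\leq n_j$ fails to return any index $n$ for which both $x_n\in M[s_j]$ and $\neg R_{\initSeg{M[s_j]}{n}\cup\{x_n\}}(p_j)$ hold. So for every $n\leq n_j$ it is not the case that $x_n\in M[s_j]$ together with $\neg R_{\initSeg{M[s_j]}{n}\cup\{x_n\}}(p_j)$; equivalently $x_n\in M[s_j]$ implies $\neg\neg R_{\initSeg{M[s_j]}{n}\cup\{x_n\}}(p_j)$. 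Since $R$ is decidable (Lemma \ref{lem-logcomp}) the double negation may be removed, giving $x_n\in M[s_j]\Rightarrow R_{\initSeg{M[s_j]}{n}\cup\{x_n\}}(p_j)$ for all $n\leq n_j$, with $p_j=\phi(s_j)=\phi(M[s_j],f[s_j])$, which is the first clause.

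I do not expect a genuine obstacle here: the only point requiring a moment's care is the appeal to decidability of $R$ in order to strip the double negation that arises when reading off the \emph{failure} of the search, and the bookkeeping observation that Lemma \ref{lem-domain} already supplies the "excluded element" clause verbatim. Everything else is unwinding the definitions of $M[s_j]$, $f[s_j]$, and the termination condition.
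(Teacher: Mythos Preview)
Your proof is correct and follows essentially the same route as the paper: the termination condition immediately yields the first clause, and Lemma \ref{lem-domain} gives the second. Your explicit invocation of the decidability of $R$ to remove the double negation is a harmless elaboration of what the paper simply absorbs into ``by definition''.
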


\begin{proof}
If the algorithm terminates, then by definition it holds that for all $n\leq n_j=\omega(s_j)$, if $x_n\in M[s_j]$ then $R_{\initSeg{M[s_j]}{n}\cup\{x_n\}}(p_j)$ where $p_j=\phi(s_j)$. But if $x_n\notin M[s_j]$ then $\neg R_{\initSeg{M[s_j]}{n}\cup\{x_n\}}(f[s_j](n))$ by Lemma \ref{lem-domain}, and so we're done.
\end{proof}

\subsection{Termination}

It remains, then, to show that our algorithm actually terminates on some reasonable set of parameters! Here, we make an additional abd completely standard assumption, namely that the functionals $(\omega,\phi)$ are \emph{continuous}.

\begin{definition}
\label{def-cont}
We say that $(\omega,\phi)$ are continuous if for all states $s:\NN\to\{\ast\}+\NN$ (which encode $M,f$) there exists some $L$ such that for any other input state $s'$, if $\initSeg{s}{L}=\initSeg{s'}{L}$ then 
\begin{equation*}
(\omega,\phi)(s)=(\omega,\phi)(s').
\end{equation*}
\end{definition}

Note that whenever $(\omega,\phi)$ are instantiated by computable functionals, they will automatically be continuous, so restricting ourselves to the continuous setting is entirely reasonable.

\begin{theorem}
\label{thm-termination}
Whenever the algorithm is run on continuous parameters $(\omega,\phi)$, it terminates after a finite number of steps.
\end{theorem}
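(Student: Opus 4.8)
The plan is to argue by contradiction: suppose the algorithm runs forever, producing an infinite sequence of states $s_0, s_1, s_2, \ldots$. The key observation is that each transition $s_i \to s_{i+1}$ modifies the state at some position $n$ (with $0 \le n \le n_i$): it overwrites $s_i(n) = (\ast)$ with the value $p_i \in \NN$, and resets all positions $k > n$ back to $(\ast)$. Writing $c(i)$ for this position $n$ at step $i$, I would first establish that the positions behave like a \emph{descending lexicographic} process in a suitable sense — or more precisely, that no position can be "touched" infinitely often without some strictly smaller position being touched in between. Concretely, once position $n$ receives value $p_i$ at step $i+1$, it retains that value (and stays out of $M$) until some later step $i'$ at which $c(i') < n$; conversely, between two steps that touch position $n$, there must be an intervening step touching a strictly smaller position. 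This is the combinatorial heart of the argument.

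From this, I would extract the standard consequence: there is a smallest position $n_\ast$ that is touched infinitely often. After the last time any position $< n_\ast$ is touched — say after step $N$ — the initial segment $\initSeg{s_i}{n_\ast}$ is constant for all $i > N$, and hence $\initSeg{s_i}{L}$ stabilizes for $L = n_\ast$. Here is where continuity of $(\omega,\phi)$ enters: by Definition \ref{def-cont} applied to the stabilized initial segment, there is some $L$ such that all states agreeing with the limit on their first $L$ entries give the same value of $(\omega,\phi)$; so from some step onward $(\omega,\phi)(s_i)$ — and in particular $n_i$ and $p_i$ — are a fixed pair $(n_\infty, p_\infty)$. But the transition at a step touching $n_\ast$ writes exactly the current $p_i = p_\infty$ into position $n_\ast$; the next time $n_\ast$ is touched (which happens, since $n_\ast$ is touched infinitely often and no smaller position is touched after $N$), the algorithm must have found $n_\ast$ satisfying $x_{n_\ast} \in M[s_i]$ and $\neg R_{\initSeg{M[s_i]}{n_\ast} \cup \{x_{n_\ast}\}}(p_i)$. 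But $p_i = p_\infty$ is unchanged and $\initSeg{s_i}{n_\ast}$ — hence $M[s_i]$ restricted below $n_\ast$, and hence the relevant instance of $R$ — is also unchanged from the previous time position $n_\ast$ was touched, at which point we had \emph{written} $p_\infty$ into position $n_\ast$ precisely because the same condition held. The contradiction is that after writing $p_\infty$ at position $n_\ast$, we have $x_{n_\ast} \notin M[s_{i+1}]$, so at the next step the search cannot again select $n_\ast$ — any position selected is either smaller (excluded after $N$) or, if it lies above $n_\ast$, it would have to be re-examined, but the search scans $0$ to $n_i = n_\infty$ in order and would re-encounter the now-settled smaller positions first.

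I would need to be slightly careful in the last step to track exactly which position the search at each step selects and to confirm that "touched infinitely often" forces the situation described; the cleanest route is probably to define, for each $i$, the finite tuple $(s_i(0), s_i(1), \ldots, s_i(n_\ast))$ restricted to entries that are "permanent" and argue it is eventually constant while still changing, which is impossible. The main obstacle is purely bookkeeping: making precise the claim that between consecutive modifications of a fixed position there must be a modification of a strictly smaller position, and then combining this with continuity to freeze $(n_i, p_i)$. Once both are in hand, the contradiction is immediate. An alternative, perhaps slicker, formulation: assign to each state $s_i$ the ordinal-like rank given by the position of its first non-$(\ast)$ entry (or $\omega$ if there is none), but since positions can be reset this does not decrease monotonically on its own; the genuine well-foundedness only kicks in \emph{after} continuity has pinned down $n_\infty$, bounding the region of the array that matters to a finite prefix $\{0, \ldots, n_\infty\}$, on which the state — read as a finite string over $\{(\ast)\} \cup \{0,\ldots,p_\infty\}$ ordered appropriately — does strictly decrease at each step, and hence the process must halt.
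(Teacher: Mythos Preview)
Your combinatorial observation is correct and is indeed the heart of the argument: once position $n$ receives a numerical value, it keeps that value until some later step touches a position strictly below $n$. But the conclusion you draw from it is exactly backwards. You claim there is a smallest position $n_\ast$ touched infinitely often; in fact your lemma shows that \emph{no} position is touched infinitely often (position $0$ at most once, since nothing lies below it; by induction, once all positions $<n$ are past their final touch, position $n$ can be touched at most once more). Nothing in the setup bounds the sequence of touched positions, so it may simply drift to infinity and your $n_\ast$ need not exist. This also undercuts your appeal to continuity: Definition~\ref{def-cont} produces a modulus $L$ for a fixed \emph{total} state, whereas at the point where you invoke it you have only frozen the first $n_\ast$ entries, with the tail still oscillating; there is no limit state to feed in, and even if there were, its modulus $L$ could well exceed $n_\ast$.

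The repair is short and is essentially the paper's argument. From your lemma, \emph{every} initial segment $\initSeg{s_i}{n}$ eventually stabilises, so there is a genuine pointwise limit $s_\infty$. Now apply continuity to $s_\infty$ to obtain $L$, set $N := \max\{L,\omega(s_\infty)+1\}$, and choose $j$ so large that $\initSeg{s_i}{N} = \initSeg{s_\infty}{N}$ for all $i\ge j$. Then $n_j = \omega(s_j) = \omega(s_\infty) < N$, so the step $s_j\to s_{j+1}$ would have to touch some position $n\le n_j<N$; but all entries below $N$ are frozen from step $j$ onward, contradicting $x_n\notin M[s_{j+1}]$. Your ``alternative formulation'' at the end has the same circularity: you cannot use continuity to pin down $n_\infty$ until you already have a limit state, and the limit state is exactly what the corrected use of your lemma provides.
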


\begin{proof}
Suppose that the algorithm does not terminate and thus results in an infinite run $\{s_i\}_{i\in\NN}$. We define a sequence $j_0\leq j_1\leq j_2\leq \ldots$ satisfying
\begin{equation}
\label{eqn-stability}
(\forall i\geq j_n)(\initSeg{s_i}{n}=\initSeg{s_{j_n}}{n})
\end{equation}
inductively as follows: We let $j_0:=0$, and if $j_n$ has been defined, either there exists some $j\geq j_n$ such that $x_{n}\notin M[s_j]$, in which case we define $j_{n+1}=j$, or $x_{n}\in M[s_j]$ for all $j\geq j_n$ and we set $j_{n+1}:=j_n$. To see that this construction satisfies (\ref{eqn-stability}) we use induction on $n$. The base case is trivial, so let's fix some $n$. By the induction hypothesis and the fact that $j_{n+1}\geq j_n$ we have $\initSeg{s_i}{n}=\initSeg{s_{j_{n+1}}}{n}$ for all $i\geq j_{n+1}$, and so we only need to check point $n$. Now, in the case $x_n\in M[s_i]$ for all $i\geq j_n=j_{n+1}$ we're done since this means that $s_i(n)=(\ast)$ for all $i\geq j_{n+1}$. In the other case, if $x_n\notin M[s_{j_{n+1}}]$ then $s_{j_{n+1}}(n)=p\in\NN$ and observing the manner in which the states evolves at each step, the only way this can change is if $x_m$ is removed from to $s_i$ for some $i\geq j_{n+1}$ and $m<n$. But this contradicts the induction hypothesis.

Define $s_\infty$ to be the limit of the $\initSeg{s_{j_n}}{n}$, and let $L$ be a point of continuity for $(\omega,\phi)$ on this input. Define
\begin{equation*}
j:= j_N\mbox{ \ \ \ for \ \ \ }N:=\max\{L,\omega(s_\infty)+1\}
\end{equation*}
Then in particular, since $\initSeg{s_\infty}{L}=\initSeg{s_{j}}{L}$ we must have
\begin{equation*}
n_j:=\omega(s_j)=\omega(s_\infty)<N.
\end{equation*}
But since the algorithm does not terminate, there is some $0\leq n\leq n_j$ with $x_n\in M[s_j]$ but $x_n\notin M[s_{j+1}]$. But by definition of $j=j_N$, since $n<N$ then $x_n\in M[s_j]$ implies that $x_n\in M[s_i]$ for all $i\geq j$, a contradiction. 
\end{proof}

\section{Case study: The nilradical as the intersection of all prime ideals}
\label{sec-case}

We now use our algorithm to carry out a computational analysis of the following well known fact \cite[Proposition 1.8]{AtiyahMac(1969.0)}, 
which is a frequently used form of Krull's lemma. Recall that a ring element $r$ is nilpotent if $r^e=0$ for some integer $e>0$. 

\begin{theorem}
\label{thm-nilradical}
Let $X$ be a countable commutative ring. Suppose that $r$ lies in the intersection of all prime ideals of $X$. Then $r$ is nilpotent.
\end{theorem}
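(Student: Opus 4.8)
The plan is to instantiate the abstract machinery of the previous sections with the familiar algebraic setup and then use the classical contrapositive argument, now refactored so that all appeals to the maximal object go through an approximate explicit maximal object relative to suitably chosen continuous functionals $(\omega,\phi)$. First I would recall the standard proof: if $r$ is \emph{not} nilpotent, consider the multiplicative set $U:=\{r^e\mid e>0\}$; the collection of ideals disjoint from $U$ is nonempty (it contains $\{0\}$) and chain complete, so by Zorn it has a maximal element $M$, and one checks $M$ is prime, whence $r\notin M$ contradicts the hypothesis. My job is to render this via Theorem~\ref{thm-zorn} / Theorem~\ref{thm-dc} and then, crucially, via the algorithm of Section~\ref{sec-comp-alg}.

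Concretely I would take $\rhd$ to be ideal generation as in Corollary~\ref{cor-ideal}, namely $A\rhd x$ iff $x=\sum_i x_i a_i$ with $a_i\in A$, $x_i\in X$ (a $\Sigma^0_1$ relation), so that closed sets are exactly ideals. For the predicate I would set $Q(x):\equiv (\forall e>0)\, x\neq r^e$, i.e. $Q(S)$ says $S\cap U=\emptyset$; this is $\Pi^0_1$ since equality in $X$ is decidable, so Lemmas~\ref{lem-tree}–\ref{lem-logcomp} apply and we get the decidable predicate $R_A$. Note $\closure{\emptyset}=\{0\}$ and, since $r$ is assumed non-nilpotent, $Q(\{0\})$ holds, so the hypothesis $Q(\closure{\emptyset})$ of Theorem~\ref{thm-zorn} is met. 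A maximal object $M$ w.r.t. this $\rhd$ and $Q$ is then a prime ideal with $r\notin M$: it is a proper ideal by (i),(ii); it is prime because if $ab\in M$ but $a,b\notin M$, then by maximality (iii) both $M\oplus a$ and $M\oplus b$ meet $U$, say $r^{k}\in M\oplus a$ and $r^{\ell}\in M\oplus b$, and multiplying the two membership witnesses and using $ab\in M$ gives $r^{k+\ell}\in M$, contradicting $Q(M)$; and $r\notin M$ since $r\in M$ would already violate $Q(M)$ (take $e=1$). So the classical conclusion is exactly ``$r\in\bigcap$ primes $\Rightarrow r\in M \Rightarrow \bot$''.

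The real work — and the main obstacle — is to extract from this argument explicit \emph{continuous} functionals $(\omega,\phi)$ such that an approximate explicit maximal object relative to $(\omega,\phi)$ already yields an exponent $e>0$ with $r^e=0$. The point is that the contradiction above only inspects $M$ at finitely many places: one needs to know the membership status of $r, a, b, ab$ and to evaluate $R_A$ at the specific primality-witnesses. So I would, given an approximate object $(M,f)$, let $\omega(M,f)$ be large enough to cover the indices of $r$ and of whatever finitely many ring elements the primality extraction probes, and let $\phi(M,f)$ return the counterexample index $e$ (together with the relevant derivation tree) that the argument manufactures from the data read off $M$ and $f$; continuity is then automatic since these are computable. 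Running the algorithm of Section~\ref{sec-comp-alg} on these parameters, Theorem~\ref{thm-termination} guarantees it halts in some state $s_j$, Theorem~\ref{thm-verify} says $s_j$ is an approximate explicit maximal object, and then feeding $M[s_j], f[s_j]$ back into the primality computation either produces a genuine prime ideal avoiding $r$ (impossible, by hypothesis, since $X$ is countable and $r$ lies in every prime) or — what must therefore happen — directly outputs the exponent $e$ with $r^e=0$. I expect the delicate step to be the careful bookkeeping that the ``probe set'' used by the primality argument is finite and a priori bounded, so that $\omega$ is well defined and continuous; the termination and verification are then handed to us by the abstract theorems.
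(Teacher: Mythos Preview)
Your second paragraph is essentially the paper's proof: the paper instantiates $\rhd$ as ideal generation (as in Corollary~\ref{cor-ideal}), takes $Q(x):\equiv(\forall e>0)\,x\neq r^e$, assumes for contradiction that $r$ is not nilpotent so that $Q(\closure{\emptyset})=Q(\{0\})$ holds, invokes Theorem~\ref{thm-zorn} to get a maximal $M$, and then shows $M$ is prime by exactly your argument (from $x,y\notin M$ obtain $r^{e_1}\in M\oplus x$, $r^{e_2}\in M\oplus y$, hence $r^{e_1+e_2}\in M\oplus xy$, so $xy\notin M$), which together with $r\in M$ contradicts $Q(M)$. That is the entire proof of Theorem~\ref{thm-nilradical} in the paper.

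Your third paragraph, however, is not part of the proof of this theorem at all: you are conflating the statement of Theorem~\ref{thm-nilradical} (a purely classical fact, proved classically via Theorem~\ref{thm-zorn}) with the \emph{subsequent} algorithmic analysis that the paper carries out separately in Lemmas~\ref{lem-forall}--\ref{lem-findcont} and Theorem~\ref{thm-nilalgorithm}. The paper does not use the approximate maximal object or the state-based algorithm to prove Theorem~\ref{thm-nilradical}; rather, it first gives the short Zorn-style proof above, and only afterwards shows how to replace that nonconstructive step by the algorithm to actually \emph{compute} the exponent $e$. Your sketch of how $(\omega,\phi)$ would be built is in the right spirit but is vaguer than what the paper does: there the input is packaged as an explicit realiser $\psi$ witnessing ``$S$ not prime or $r\in S$'' by a six-way case split, and Lemmas~\ref{lem-nilradicalmain}--\ref{lem-findcont} do the concrete bookkeeping you allude to. For the proof of Theorem~\ref{thm-nilradical} as stated, drop the third paragraph entirely.
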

We first show how the standard proof follows from our general maximality principle Theorem\,\ref{thm-zorn}.
\begin{proof}
Define $\rhd$ as in Corollary \ref{cor-ideal}, but now let $Q(x):=(\forall e)(e>0\Rightarrow x\neq r^e)$. 
Then $S\subseteq X$ is closed w.r.t $\rhd$ and satisfies $Q(S)$ iff it is an ideal which does not contain $r^e$ for any $e>0$. 
Suppose for contradiction that $r$ is not nilpotent, which would mean that $Q(\{0\})$ and thus $Q(\closure{\emptyset})$ hold. 
By Theorem \ref{thm-zorn} there is some $M$ which is maximal w.r.t. $\rhd$ and $Q$, and in this case $M\oplus x=\closure{M\cup\{x\}}$ is just 
the ideal generated by $M$ and $x$.

Take $x,y\notin M$. Then $\neg Q(M\oplus x)$ and hence there exists some $e_1>0$ such that $r^{e_1}\in M\oplus x$. 
Similarly, there exists some $e_2>0$ with $r^{e_2}\in M\oplus y$. But then $r^{e_1+e_2}\in M\oplus xy$ and thus $xy\notin M$. 
This would mean that $M$ is prime, but then $Q(M)$ contradicts the assumption that $r\in M$.
\end{proof}
\begin{lemma}
\label{lem-forall}
For $\rhd$ and $Q$ defined as in Theorem \ref{thm-nilradical}, we have
\begin{equation*}
Q(\closure{A})\Leftrightarrow (\forall b\in X^\ast,e)(\underbrace{|b|=k\wedge e>0\Rightarrow a_1\cdot b_1+\ldots+a_k\cdot b_k\neq r^e}_{R_A(b,e)})
\end{equation*}
where $A:=\{a_1,\ldots,a_k\}$, $X^\ast$ as usual denotes the set of lists over $X$ and $|b|$ is the length of $b$.
\end{lemma}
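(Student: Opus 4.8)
The plan is to specialize the general computation of logical complexity carried out in Lemma~\ref{lem-logcomp} to the concrete $\rhd$ and $Q$ fixed for Theorem~\ref{thm-nilradical}, making all the Skolem functions and decidable matrices explicit. First I would recall from the proof of Corollary~\ref{cor-ideal} that, for this choice of $\rhd$, we have $A \rhd x$ iff $x = x_1 \cdot a_1 + \ldots + x_k \cdot a_k$ for some coefficients $x_1, \ldots, x_k \in X$ and $a_1,\ldots,a_k \in A$; since $A = \{a_1,\ldots,a_k\}$ is finite, this means $x \in \closure{A}$ iff $x$ lies in the ideal generated by $A$, i.e. iff $x = a_1 \cdot b_1 + \ldots + a_k \cdot b_k$ for some list $b \in X^\ast$ with $|b| = k$. (Here one uses the elementary fact that the ideal generated by $A$ is already closed, so the iterated closure from Definition~\ref{def-closure} stabilises at the first step — equivalently, by Lemma~\ref{lem-plus}, successive $\rhd$-derivations can be flattened into a single linear combination.) Thus the membership predicate $x \in \closure{A}$ is itself $\Sigma^0_1$ with an explicit existential witness, namely the coefficient list $b$.

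Next I would unwind $Q$. By definition $Q(x) \equiv (\forall e)(e > 0 \Rightarrow x \neq r^e)$, which is already in the form $(\forall e) Q_0(x,e)$ with $Q_0(x,e) :\equiv (e>0 \Rightarrow x \neq r^e)$ decidable. Then I substitute both unwindings into the chain of equivalences
\begin{equation*}
\begin{aligned}
Q(\closure{A}) &\Leftrightarrow (\forall m)\big(x_m \in \closure{A} \Rightarrow Q(x_m)\big)\\
&\Leftrightarrow (\forall b \in X^\ast, e)\big(|b| = k \wedge e > 0 \Rightarrow a_1 \cdot b_1 + \ldots + a_k \cdot b_k \neq r^e\big),
\end{aligned}
\end{equation*}
where in passing from the first to the second line one quantifies over the witnessing list $b$ instead of over an index $m$ into the enumeration of $X$: every element of $\closure{A}$ is of the form $a_1 b_1 + \ldots + a_k b_k$, and conversely every such element lies in $\closure{A}$, so the two universal quantifications have the same range of relevant instances. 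This identifies $R_A(b,e)$ with the bracketed formula $|b| = k \wedge e > 0 \Rightarrow a_1 \cdot b_1 + \ldots + a_k \cdot b_k \neq r^e$, which is plainly decidable on $X^\ast \times \NN$ (the ring operations, the power $r^e$, equality in $X$, and the length and positivity checks all being decidable), so this is a legitimate instance of the schematic $R_A$ from Lemma~\ref{lem-logcomp}.

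The main obstacle — though it is more a matter of care than of genuine difficulty — is checking that one may safely replace the derivation-tree witness $t$ of Lemma~\ref{lem-tree} by the single coefficient list $b$: in the fully general setting the closure $\closure{A}$ is built up by iterating $\rhds$ and the natural $\Sigma^0_1$ witness is a finite derivation tree, whereas here the algebraic structure lets one collapse any such tree to one linear combination. I would justify this collapse by observing that, for the ideal-membership relation, if $\bigcup_{j \leq i} S_j \rhds x$ then already $S \rhds x$ whenever $S$ is the ideal generated by $A$ — so $S_1 = \closure{A}$ — and more concretely that sums and ring-multiples of linear combinations over $A$ are again linear combinations over $A$; this is exactly the content of Lemma~\ref{lem-plus} applied repeatedly. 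Everything else is the routine bookkeeping of pairing $b$ and $e$ into a single quantified variable and matching the shape of the formula against Definition~\ref{def-expmax}, so I would present the argument as essentially an explicit reading-off of Lemma~\ref{lem-logcomp}, noting only the one place where the ring structure is used to simplify the witness.
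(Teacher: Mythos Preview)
The paper states this lemma without proof, evidently regarding it as a routine unwinding of definitions. Your argument is correct and supplies precisely those details: the key observation that for the ideal-generation relation $\rhd$ the closure $\closure{A}$ stabilises at the first stage (so membership is witnessed by a single coefficient list $b$ rather than a general derivation tree from Lemma~\ref{lem-tree}), followed by a direct unfolding of $Q$, yields exactly the claimed equivalence.
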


%
%


%
Our aim will be to address the following computational challenge, given any fixed $X$ and $r$,
\begin{itemize}

\item \textbf{Input.} Evidence that $r$ lies in the intersection of all prime ideals

\item \textbf{Output.} An exponent $e>0$ such that $r^e=0$

\end{itemize} 
The first question is what we take to be evidence that $r$ lies in all prime ideals. Note that this assumption is logically equivalent to the statement
\begin{equation*}
(\forall S\subseteq X)(\mbox{$S$ is not prime}\vee r\in S),
\end{equation*}
so for a computational interpretation of the above it would be reasonable to ask for a procedure which takes some $S\subseteq X$ as input, and either confirms that $r\in S$ or demonstrates that $S$ is not a prime ideal.

Let's now fix some enumeration of $X$, where we assume for convenience that $x_0=0_X$, $x_1=1_X$ and $x_2=r$. From now on we assume that we have some function 
\begin{equation*}
\psi:\mathcal{P}(X)\to \{0,1,2\}+(\{3,4,5\}\times\NN^3) 
\end{equation*}
which for any $S\subseteq X$ satisfies

\begin{itemize}

\item $\psi(S)=0\Rightarrow 0_X\notin S$

\item $\psi(S)=1\Rightarrow 1_X\in S$

\item $\psi(S)=2\Rightarrow r\in S$

\item $\psi(S)=(3,i,j,k)\Rightarrow (x_i+x_j=x_k)\wedge (x_i,x_j\in S)\wedge (x_k\notin S)$

\item $\psi(S)=(4,i,j,k)\Rightarrow (x_i\cdot x_j=x_k)\wedge (x_i\in S)\wedge (x_k\notin S)$

\item $\psi(S)=(5,i,j,k)\Rightarrow (x_i\cdot x_j=x_k)\wedge (x_i,x_j\notin S)\wedge (x_k\in S)$

\end{itemize}
The functional $\psi$ witnesses the statement that $r\in S$ or $S$ is not a prime ideal. 

\begin{lemma}
\label{lem-nilradicalmain}
Suppose that $M\subseteq X$ and $f$ satisfy
\begin{equation}
\label{eqn-nilradicalmain}
x_n\notin M\Rightarrow \neg R_{\initSeg{M}{n}\cup\{x_n\}}(f_1(n),f_2(n))
\end{equation}
where $R_A(b,e)$ is as in Lemma \ref{lem-forall} and if $f(n)=\pair{b,e}$ then $f_1(n)=b$ and $f_2(n)=e$. Whenever $\psi(M)\neq 0$ there exists some nonempty $A=\{a_1,\ldots,a_l\}\subseteq M$ together with a sequence $[b_1,\ldots,b_l]$ of elements of $X$ and $e>0$ such that
\begin{equation*}
a_1\cdot b_1+\ldots +a_l\cdot b_l=r^e.
\end{equation*}
Moreover, $e,A$ and $b$ are computable in $\psi$, $M$ and $f$.
\end{lemma}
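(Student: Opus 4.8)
The plan is to exploit the case distinction on the value of $\psi(M)$ and show that in every case other than $\psi(M)=0$ we can extract the required data, using the hypothesis (\ref{eqn-nilradicalmain}) to convert "$x_n\notin M$" into a concrete membership certificate $r^{f_2(n)} \in \initSeg{M}{n}\oplus x_n$. First I would unwind what $\neg R_A(b,e)$ means via Lemma \ref{lem-forall}: if $x_n \notin M$, then with $A = \initSeg{M}{n}\cup\{x_n\}$ enumerated as $\{a_1,\dots,a_k\}$ we get $|f_1(n)| = k$, $f_2(n) > 0$, and $a_1 \cdot (f_1(n))_1 + \dots + a_k \cdot (f_1(n))_k = r^{f_2(n)}$, i.e. a linear combination of elements of $\initSeg{M}{n}\cup\{x_n\}$ equal to a positive power of $r$. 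Note $a_n = x_n$ is the "new" element and all the other $a_i$ lie in $M$.

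Next I would go through the cases of $\psi(M)$. The cases $\psi(M) \in \{1, 2\}$ are immediate: if $\psi(M) = 1$ then $1_X = x_1 \in M$, and since $1_X \cdot 1_X = 1_X = r^0$ — wait, we need $e > 0$; better, observe $1_X \in M$ gives $r^1 = r \cdot 1_X$ as a combination over the singleton $A = \{1_X\} \subseteq M$ with coefficient $b_1 = r$ and $e = 1$. If $\psi(M) = 2$ then $r = x_2 \in M$, and $r = 1_X \cdot r$ over $A = \{r\}$ with $e = 1$. The substantive cases are $\psi(M) = (3,i,j,k)$, $(4,i,j,k)$, $(5,i,j,k)$. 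In the $(3,i,j,k)$ case we have $x_i, x_j \in M$ but $x_k \notin M$, so (\ref{eqn-nilradicalmain}) applied to $n = k$ gives $r^{f_2(k)} = c \cdot x_k + (\text{combination over } \initSeg{M}{k} \subseteq M)$ for some $c \in X$; now substitute $x_k = x_i + x_j$ to rewrite $c \cdot x_k = c \cdot x_i + c \cdot x_j$, folding these into the combination, all of whose generators now lie in $M$. Similarly the $(4,i,j,k)$ case: $x_i \in M$, $x_k = x_i \cdot x_j \notin M$, apply (\ref{eqn-nilradicalmain}) at $n = k$ and substitute $x_k = x_i \cdot x_j$, so $c \cdot x_k = (c \cdot x_j) \cdot x_i$ is absorbed into the $x_i$-term, keeping all generators in $M$.

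The genuinely delicate case — and the main obstacle — is $\psi(M) = (5,i,j,k)$, where $x_k \in M$ but $x_i, x_j \notin M$ with $x_i \cdot x_j = x_k$. This mirrors the "$M$ is prime" step in the proof of Theorem \ref{thm-nilradical}: applying (\ref{eqn-nilradicalmain}) to $n = i$ and $n = j$ yields $r^{e_1} \in \initSeg{M}{i}\oplus x_i$ and $r^{e_2} \in \initSeg{M}{j}\oplus x_j$, i.e. $r^{e_1} = c \cdot x_i + (\text{comb. over } \initSeg{M}{i})$ and $r^{e_2} = d \cdot x_j + (\text{comb. over } \initSeg{M}{j})$. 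Multiplying these two equations, $r^{e_1 + e_2}$ becomes a sum in which every term either contains the factor $x_i \cdot x_j = x_k \in M$, or contains a factor from $\initSeg{M}{i} \subseteq M$ or $\initSeg{M}{j} \subseteq M$; so after expansion and collecting, $r^{e_1 + e_2}$ is a linear combination of elements of $M$. I would carry out this expansion carefully, tracking that each product term has at least one generator in $M$, and read off the list $A = \{a_1, \dots, a_l\} \subseteq M$ of those generators, the coefficient list $[b_1, \dots, b_l]$, and $e = e_1 + e_2 > 0$ (discarding any generators from $A$ that end up with coefficient zero, and noting $A$ is nonempty since $r^e \neq 0$ forces a nontrivial combination). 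In every case $e$, $A$, and $b$ are obtained by evaluating $\psi(M)$, then $f$ at one or two explicit indices, followed by purely algebraic rearrangement, hence computable in $\psi$, $M$, $f$; this gives the "moreover" clause.
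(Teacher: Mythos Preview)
Your proposal is correct and follows essentially the same case analysis as the paper's proof in the appendix, including the key multiplication trick in the $(5,i,j,k)$ case. One small remark: your justification for $A$ being nonempty (``$r^e \neq 0$ forces a nontrivial combination'') is not valid since $r^e$ may well be zero here, but nonemptiness is immediate anyway because each case exhibits an explicit element of $M$ in $A$ (e.g.\ $x_k = x_i x_j$ in case $(5,i,j,k)$), so there is no need to discard zero-coefficient generators.
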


\begin{proof}
This fairly routine case analysis is included in the appendix.
\end{proof}

\begin{lemma}
\label{lem-findcont}
Suppose that $M$ and $f$ satisfy (\ref{eqn-nilradicalmain}) as in Lemma \ref{lem-nilradicalmain} and that $\psi(M)\neq 0$. Then there exists some $n\in\NN$, sequence $b$ and $e>0$ such that 
\begin{itemize}

\item $x_n\in M$,

\item $\neg R_{\initSeg{M}{n}\cup\{x_n\}}(b,e)$

\end{itemize}
and moreover, $n$, $b$ and $e$ are computable in $\psi$, $M$ and $f$.
\end{lemma}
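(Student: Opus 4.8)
The plan is to read off $n$, $b$ and $e$ almost directly from Lemma~\ref{lem-nilradicalmain}. Since $M$ and $f$ satisfy (\ref{eqn-nilradicalmain}) and $\psi(M)\neq 0$, that lemma provides a nonempty finite set $A=\{a_1,\ldots,a_l\}\subseteq M$, a list $[b_1,\ldots,b_l]$ of elements of $X$ and an exponent $e>0$ with $a_1\cdot b_1+\ldots+a_l\cdot b_l=r^e$, all computable in $\psi$, $M$ and $f$. What remains is to fit this identity into a single initial segment of $M$.

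To do this I would take $n:=\max\{m\in\NN\; | \; x_m\in A\}$, which is well defined since $A$ is finite and nonempty, and computable since $A$ is. Then $x_n\in A\subseteq M$, so the first required condition holds; and every element of $A$ distinct from $x_n$ has enumeration index strictly below $n$ while lying in $M$, so $A\setminus\{x_n\}\subseteq\initSeg{M}{n}$ and hence $A\subseteq\initSeg{M}{n}\cup\{x_n\}$. Enumerating this finite set in the canonical way as $\{c_1,\ldots,c_k\}$, I would then define the coefficient list $b:=[\beta_1,\ldots,\beta_k]$ by $\beta_j:=b_i$ if $c_j=a_i$ for some (necessarily unique) $i$, and $\beta_j:=0_X$ otherwise. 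Since the extra summands vanish we obtain $\sum_{j\leq k}c_j\cdot\beta_j=\sum_{i\leq l}a_i\cdot b_i=r^e$ with $e>0$ and $|b|=k$, which by the definition of $R_A$ in Lemma~\ref{lem-forall} is exactly $\neg R_{\initSeg{M}{n}\cup\{x_n\}}(b,e)$. Each of $n$, $b$ and $e$ was produced by an effective procedure from $A$, $[b_1,\ldots,b_l]$, $e$ and the oracle $M$ (needed only to list $\initSeg{M}{n}$), so the computability clause follows.

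There is no genuine obstacle here: the statement is essentially a repackaging of Lemma~\ref{lem-nilradicalmain}. The only points needing a little care are making the padded coefficient list $b$ have length exactly the cardinality of $\initSeg{M}{n}\cup\{x_n\}$, as the predicate $R_A$ requires, and checking that selecting the maximal index occurring in $A$ and reading off the canonical enumeration of $\initSeg{M}{n}\cup\{x_n\}$ remain effective relative to $\psi$, $M$ and $f$.
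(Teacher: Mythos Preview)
Your argument is correct and follows the same route as the paper: apply Lemma~\ref{lem-nilradicalmain} to obtain $A$, the coefficient list and $e$, take $n$ to be the maximal enumeration index occurring in $A$, and pad the coefficient list with zeroes so that it matches $\initSeg{M}{n}\cup\{x_n\}$. The only difference is that you spell out the padding a little more explicitly than the paper does.
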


\begin{proof}
By Lemma \ref{lem-nilradicalmain} there exist, computable in $\psi$, $M$ and $f$, a nonempty $A=\{a_1,\ldots,a_l\}\subseteq M$ together with $b=[b_1,\ldots,b_l]$ and $e>0$ satisfying $a_1\cdot b_1+\ldots +a_l\cdot b_l=r^e$. In particular, we can find some $n\in\NN$ which is the maximal with $x_n\in A\subseteq M$, and thus $A\subseteq \initSeg{M}{n}\cup\{x_n\}$. But by expanding $b$ to some sequence $b'$ with zeroes added wherever needed, we have
\begin{equation*}
x_{\alpha_1}\cdot b'_1+\ldots +x_{\alpha_p}\cdot b'_p+x_n\cdot b'_{p+1}=r^e
\end{equation*}
where $\{x_{\alpha_1},\ldots,x_{\alpha_p}\}=\initSeg{M}{n}$, and thus $\neg R_{\initSeg{M}{n}\cup\{x_n\}}(b',e)$ holds.
\end{proof}

\begin{theorem}
\label{thm-nilalgorithm}
Given our input realizer $\psi$, define the functionals $\omega,\phi$ by
\begin{equation*}
(\omega,\phi)(M,f):=\begin{cases}n,\pair{b,e} & \mbox{if $\psi(M)\neq 0$, where $n,b$ and $e$ satisfy Lemma \ref{lem-findcont}}\\ 0,\pair{[],0} & \mbox{otherwise} \end{cases}
\end{equation*}
Suppose that the algorithm $\{s_i\}_{i\in\NN}$ described in Section \ref{sec-comp-alg} is run on $(\omega,\phi)$, and for $R_A(b,e)$ as defined in Lemma \ref{lem-forall}. Then the algorithm terminates in some final state $s_j$ satisfying
\begin{equation*}
s_j(0)_2>0\wedge r^{s_j(0)_2}=0_X.
\end{equation*}
\end{theorem}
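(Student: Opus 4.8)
The plan is to obtain the conclusion by running the general algorithm of Section~\ref{sec-comp-alg} on the functionals $(\omega,\phi)$ just defined and then reading off the required exponent from the terminal state, using only Theorem~\ref{thm-termination} and the precise shape of the termination test, together with the bookkeeping lemmas of this section.

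First I would check that the algorithm is well defined and halts. At each state $s_i$ of the run, Lemma~\ref{lem-domain} guarantees that $(M[s_i],f[s_i])$ satisfies hypothesis~(\ref{eqn-nilradicalmain}); hence whenever $\psi(M[s_i])\neq 0$ the data $n,b,e$ demanded by Lemma~\ref{lem-findcont} exist, so $(\omega,\phi)(s_i)$ is defined. Since these data are computed uniformly from $\psi,M,f$ and $\psi$ is computable, the functionals $(\omega,\phi)$ are continuous in the sense of Definition~\ref{def-cont}, so Theorem~\ref{thm-termination} applies and the run ends in some final state $s_j$.

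The crucial observation is that a terminal state must satisfy $\psi(M[s_j])=0$. Indeed, if $\psi(M[s_j])\neq 0$ then by the definition of $(\omega,\phi)$ we have $(n_j,p_j)=(\omega,\phi)(s_j)$ with $p_j=\pair{b,e}$, where by Lemma~\ref{lem-findcont} $x_{n_j}\in M[s_j]$ and $\neg R_{\initSeg{M[s_j]}{n_j}\cup\{x_{n_j}\}}(b,e)$. Then taking $n:=n_j$ (which trivially lies in the search range $0\le n\le n_j$) meets both conditions the algorithm tests for at step $j$, so the search succeeds and the algorithm does \emph{not} terminate at $s_j$ --- a contradiction. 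This matching of the range of $(\omega,\phi)$ with the termination test is the one genuinely non-routine point, and is precisely why $(\omega,\phi)$ was set up to output such an $n$.

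It then remains to unwind definitions. From $\psi(M[s_j])=0$ and the first clause in the specification of $\psi$ we get $0_X=x_0\notin M[s_j]$. Since $\initSeg{M[s_j]}{0}=\emptyset$, Lemma~\ref{lem-domain} at $n=0$ yields $\neg R_{\{0_X\}}(f[s_j](0))$. Writing $f[s_j](0)=s_j(0)=\pair{b^{\ast},e^{\ast}}$ and expanding $R_A$ from Lemma~\ref{lem-forall} for $A=\{0_X\}$ (so $k=1$ and $a_1=0_X$), this negation forces $e^{\ast}>0$ and $0_X\cdot b^{\ast}_1=r^{e^{\ast}}$, i.e. $r^{e^{\ast}}=0_X$. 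As $e^{\ast}=s_j(0)_2$, this is exactly $s_j(0)_2>0\wedge r^{s_j(0)_2}=0_X$, completing the proof.
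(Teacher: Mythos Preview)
Your proof is correct and follows essentially the same route as the paper: establish continuity of $(\omega,\phi)$ to invoke Theorem~\ref{thm-termination}, argue by contradiction via Lemma~\ref{lem-findcont} that $\psi(M[s_j])=0$ at the terminal state, and then apply Lemma~\ref{lem-domain} at $n=0$ to read off the exponent from $s_j(0)$. If anything, your version is slightly more careful in noting that Lemma~\ref{lem-domain} is what guarantees the hypothesis~(\ref{eqn-nilradicalmain}) of Lemma~\ref{lem-findcont} at every intermediate state, so that $(\omega,\phi)$ is well defined throughout the run.
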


\begin{proof}
First of all, we note that $(\omega,\phi)$ are computable, and so in particular must be continuous in the sense of Definition \ref{def-cont}. Therefore the algorithm terminates in some final state $s_j$. By Lemma \ref{lem-domain} we have
\begin{equation}
\label{eqn-nilalgorithm}
x_n\notin M[s_j]\Rightarrow \neg R_{\initSeg{M[s_j]}{n}\cup\{x_n\}}(f_1[s_j](n),f_2[s_j](n)).
\end{equation}
We claim that $\psi(M[s_j])=0$. If this were not the case, then by Lemma \ref{lem-findcont} and the definition of $(\omega,\phi)$ we would have $x_{n_j}\in M[s_j]$ and $\neg R_{\initSeg{M[s_j]}{n_j}\cup\{x_{n_j}\}}(b_j,e_j)$ for
\begin{equation*}
(n_j,\pair{b_j,e_j})=(\omega,\phi)s_j
\end{equation*}
and so by definition the algorithm cannot be in a final state. This proves the claim. But $\psi(M[s_j])=0$ implies that $x_0=0_X\notin M[s_j]$, and therefore by (\ref{eqn-nilalgorithm}) we have $\neg R_{\{x_0\}}(b,e)$ where $\pair{b,e}=f[s_j](0)=s_j(0)$, which is just
\begin{equation*}
|b|=1\wedge e>0\wedge x_0\cdot b_0=r^e.
\end{equation*}
But since $x_0\cdot b_0=0_X\cdot b_0=0$ we have $r^e=0$ i.e. $r^{s_j(0)_2}=0_X$.
\end{proof}

\subsection{Informal description of algorithm}

The basic idea behind the algorithm in this section is the following.

\begin{itemize}

\item Each state $s_i$ encodes some $M[s_i]\subseteq X$, where $x_n\notin M[s_i]$ only if we have found evidence that $\initSeg{M[s_i]}{n}\cup\{x_n\}$ generates $r^e$ for some $e>0$, in which case this evidence is encoded as $s_i(n)\in\NN$.

\item We start off at $s_0$ with the full set $M[s_0]=X$.

\item At state $s_i$ we interact with our functional $\psi$, which provides us with evidence that either $M[s_i]$ is not a prime ideal, or $r\in M[s_i]$.

\item If this evidence takes the form of anything other than $0_X\notin S$, then we are able to use this to find some $x_n\in M$ and evidence that $\initSeg{M}{n}\cup\{x_n\}$ generates $r^e$ for some $e>0$. We exclude $x_n$ from $M[s_i]$ but add all $x_k$ for all $k>n$ (since now the evidence that $\initSeg{M[s_i]}{k}\cup\{x_k\}$ generates $r^{e'}$ could be falsified by the removal of $x_n$).

\item Eventually, using a continuity argument, the algorithm terminates in some state $s_j$. The only way this can be is if $\psi(M[s_j])=0$, which indicates that $0_X\notin M[s_j]$. Thus $\{0_X\}$ generates $r^e$ for some $e>0$ encoded in the state. 

\end{itemize}

%

%








\subsection{Example: Nilpotent coefficients of invertible polynomials}

We conclude by outlining a simple and very concrete application \cite[pp.~10--11]
{AtiyahMac(1969.0)}  
of Theorem \ref{thm-nilradical}, and sketching how our algorithm would be implemented in this case. Fixing our countable commutative ring $X$, 
let $f=\sum_{i=0}^n a_i T_i$ be a unit in the polynomial ring $X[T]$. Then each $a_i$ for $i>0$ is nilpotent. 
To prove this, by Theorem \ref{thm-nilradical} it suffices to show that $a_i\in P$ for all prime ideals $P$ of $X$. 

Let $g\in X[T]$ be such that $fg=1$, and let $P$ be some arbitrary prime ideal. Then we also have $fg=1$ in $(X/P)[T]$, but since $P$ 
is prime, $X/P$ is an integral domain, and thus $0=\degr(fg)=\degr(f)+\degr(g)$. This implies that $\degr(f)=0$ in $(X/P)[T]$ and thus $a_i\in P$ for all $i>0$.

In order to obtain a concrete algorithm, which for any $a_i$ for $i>0$, produces some $e>0$ such that $r^e=0$, we need to analyse the above argument to produce a specific functional $\psi$ which for any $S\subseteq X$, witnesses the statement that either $a_i\in S$ or $S$ is not a prime ideal. Fixing $i>0$ and $S$, we define $\psi(S)$ via the following algorithm:
\begin{itemize}

\item Check in turn whether any of $0\notin S$, $1\in S$ or $a_i\in S$ are true. In the first case, return $\psi(S)=0$, and in the other, $\psi(S)=1$ and $\psi(S)=2$ respectively.

\item Otherwise, let $g=\sum_{j=0}^m b_j T^j\in X[T]$ be such that 
$1=fg=\sum^{n+m}_{k=0} c_k T^k$ for $c_k=\sum_{j=0}^k a_jb_{k-j}$. Then in particular, for $i>0$ we have 
$0=c_i=
\sum_{j=0}^{i-1} a_j b_{i-j}+a_ib_0$ 
and so (using that $a_0b_0=c_0=1$):
\begin{equation}
\label{eqn-polysum}
a_i=-a_0\sum_{j=0}^{i-1} a_j b_{i-j}.
\end{equation}
\begin{itemize}

\item Either $b_1,\ldots,b_i\in S$, and since $a_i\notin S$, an analysis of the r.h.s. of (\ref{eqn-polysum}) allows us to find, in a finite number of steps, either some $x_u,x_v\in S$ and $x_w\notin S$ such that $x_w=x_u+x_v$, in which case we return $\psi(S)=(3,u,v,w)$, or some $x_u\in S,x_v$ and $x_w\notin S$ such that $x_w=x_ux_v$, in which case we return $\psi(S)=(4,u,v,w)$.

\item Or $b_j\notin S$ for some $1\leq j\leq i$. Take $1\leq k\leq n$ and $1\leq l\leq m$ to be the maximal such that $a_k,b_l\notin S$ and consider
\begin{equation*}
0=c_{k+l}=
a_kb_l+\sum_{p+q=k+l\wedge (p>k\vee q>l)}a_pb_q. 
\end{equation*}
Then either $x_w=a_kb_l\in S$, in which case return $\psi(S)=(5,k,u,v)$ for $x_u,x_v=a_k,b_l$ or $\sum a_pb_q=-a_kb_l\notin S$, and since for each summand $a_pb_q$ either $a_p\in S$ or $b_q\in S$, an analysis identical to the previous case returns $\psi(S)=(3,u,v,w)\mbox{ or }(4,u,v,w)$ for suitable $u,v,w$.

\end{itemize}

\end{itemize}

Therefore, running our algorithm for $\psi$ as defined above results in a sequential algorithm which, by Theorem \ref{thm-nilalgorithm} terminates in some final state $s_j$ with $f[s_j]=\pair{b,e}$ for $e>0$ and $a_i^e=0$.

\begin{example}
In the very simple case where $X=\ZZ_4$ and $f=1+2T$, the corresponding run our algorithm for $a_1=2$ would be as follows;
\begin{itemize}

\item $M[s_0]=\ZZ_4$ and $\psi(\ZZ_4)=1$ (since $1\in \ZZ_4$). Remove $1$ with evidence $1\cdot 2=2^1$.

\item $M[s_1]=\ZZ_4\backslash \{1\}$ and $\psi(\ZZ_4\backslash \{1\})=2$ (since $a_1=2\in \ZZ_4\backslash\{1\})$. Remove $2$ with evidence $2\cdot 1=2^1$.

\item $M[s_2]=\ZZ_4\backslash \{1,2\}$. Noting that $(1+2T)(1+2T)=1$, we have $b_1=2\notin \ZZ_4\backslash \{1,2\}$, and so $a_1,b_1=2$ are the maximal with $a_1,b_1\notin \ZZ_4\backslash \{1,2\}$. Then $0=c_2=a_1\cdot b_1\in \ZZ_4\backslash \{1,2\}$, and thus $\psi(\ZZ_4\backslash \{1,2\})=(5,2,2,0)$, and so we remove $0$ with evidence $0+2\cdot 2=2^2$.

\item $M[s_3]=\ZZ_4\backslash \{0,1,2\}$ and $\psi(\ZZ_4\backslash \{0,1,2\})=0$, so the algorithm terminates with $e=2$.

\end{itemize}

\end{example} 
%
%
%
%

\newpage

\appendix

\section{Appendix}


\begin{proof}
[Lemma \ref{lem-nilradicalmain}]
We deal with each case in turn. Since $\psi(M)\neq 0$ there are five remaining possibilities:
\begin{itemize}
\setlength\itemsep{1em}

\item $\psi(M)=1$, i.e. $x_1=1_X\in M$ and so we set $e:=1$, $A:=\{x_1\}$ and $b:=[x_2]$ (recall that $x_2=r$).

\item $\psi(M)=2$, i.e. $x_2=r\in M$ and so $e:=1$, $A:=\{x_2\}$ and $b:=[x_1]$ work.

\item $\psi(M)=(3,i,j,k)$. Since $x_k\notin M$, by (\ref{eqn-nilradicalmain}) for $b'=f_1(k)$ we have
\begin{equation*}
x_{\alpha_1}\cdot b'_1+\ldots+x_{\alpha_p}\cdot b'_p+x_k\cdot b'_{p+1}=r^{f_2(k)}
\end{equation*}
for $\{x_{\alpha_1},\ldots,x_{\alpha_p}\}=\initSeg{M}{k}$. But then
\begin{equation*}
x_{\alpha_1}\cdot b'_1+\ldots+x_{\alpha_p}\cdot b'_p+(x_i+x_j)\cdot b'_{p+1}=r^{f_2(k)}
\end{equation*}
and so $e:=f_2(k)$, together with $A:=\{x_{\alpha_1},\ldots,x_{\alpha_p},x_i,x_j\}\subseteq M$ and $b:=[b'_1,\ldots,b'_p,b'_{p+1},b'_{p+1}]$ work.

\item $\psi(M)=(4,i,j,k)$. Entirely analogously, but this time we have
\begin{equation*}
x_{\alpha_1}\cdot b'_1+\ldots+x_{\alpha_p}\cdot b'_p+x_i\cdot (x_j\cdot b'_{p+1})=r^{f_2(k)}
\end{equation*}
and so $e:=f_2(k)$, $A:=\{x_{\alpha_1},\ldots,x_{\alpha_p},x_i\}$ and $b:=[b'_1,\ldots,b'_p,x_j\cdot b'_{p+1}]$ work.

\item $\psi(M)=(5,i,j,k)$. For $b'=f_1(i)$ and $b''=f_1(j)$ we have $x_{\alpha_1}\cdot b'_1+\ldots+x_{\alpha_p}\cdot b'_p+x_i\cdot b'_{p+1}=r^{f_2(i)}$ and $x_{\beta_1}\cdot b''_1+\ldots+x_{\beta_q}\cdot b''_q+x_j\cdot b''_{q+1}=r^{f_2(j)}$ where $\{x_{\alpha_1},\ldots,x_{\alpha_p}\}=\initSeg{M}{i}$ and $\{x_{\beta_1},\ldots,x_{\beta_q}\}=\initSeg{M}{j}$, and therefore
\begin{equation*}
\begin{aligned}
&(x_{\alpha_1}\cdot b'_1+\ldots+x_{\alpha_p}\cdot b'_p)\cdot r^{f_2(j)}+x_i\cdot b'_{p+1}\cdot (x_{\beta_1}\cdot b''_1+\ldots+x_{\beta_q}\cdot b''_{q})\\
&+x_i\cdot x_j\cdot b'_{p+1}\cdot b''_{q+1}=r^{f_2(i)+f_2(j)}
\end{aligned}
\end{equation*}
and so $e:=f_1(i)+f_2(j)$, $A:=\{x_{\alpha_1},\ldots,x_{\alpha_p},x_{\beta_1},\ldots,x_{\beta_q},x_i\cdot x_j\}$ and the corresponding $b$ from the above equation work.

\end{itemize}
\end{proof}

\bibliographystyle{splncs04}
\bibliography{tp,synergy}

\end{document}